\newtheorem{observation}{Observation}
\newtheorem{lemma}{Lemma}
\newtheorem{problem}{Problem}
\newtheorem{definition}{Definition}
\newtheorem{theorem}{Theorem}
\newcommand{\IN}{ proper }
\newcommand{\MON}{ monotone }
\DeclareMathOperator{\ret}{ret}
\begin{document}
\title{The En Route Truck-Drone Delivery Problem}

\author[1]{Danny Krizanc}
\author[2]{Lata Narayanan} 
\author[2]{Jaroslav Opatrny} 
\author[2]{Denis Pankratov}
\affil[1]{Department of Mathematics \& Comp. Sci., Wesleyan University, Middletown, USA}
\affil[2]{Department of CSSE, Concordia University, Montreal, Canada}
\maketitle

\begin{abstract}
We study the truck-drone cooperative delivery problem in a setting where a single truck carrying a drone  travels at constant speed on a straight-line trajectory/street. Delivery to clients located in the plane and not on the truck's trajectory is performed by the drone, which has limited carrying capacity and flying range, and whose battery can be recharged when on the truck. We show that the problem of maximizing the number of deliveries is strongly NP-hard even in this simple setting. We present a 2-approximation algorithm for the problem, and an optimal algorithm for a non-trivial family of instances.  
\end{abstract}
\section{Introduction}

The use of unmanned aerial vehicles or drones for last-mile delivery in the logistics industry has received considerable attention in business and academic communities, see for example \cite{dr1,dr2,dr3,macrina2020drone}.  
%Global commercial drone market growth was valued at \$ 8.77 billion in 2022 and  is projected to grow from \$ 10.98 billion in 2023 to \$ 54.81 billion by 2030 \cite{fbi}. 
Drones have been shown in a recent analysis  \cite{Raghu2023} to have significantly less lifecycle costs, and faster delivery time compared to diesel or electric trucks in urban, suburban, and rural settings, and have less harmful emissions compared to diesel trucks. 
%Furthermore, drones are a new technology with much potential for improvement across the aforementioned criteria. 
The potential applications where drone delivery could make a big impact include
contactless delivery, return of unsatisfactory goods, rural or hard-to-access delivery and delivery
in disaster relief scenarios.

%are diverse and numerous. For example, the covid-19 pandemic has greatly increased the interest of consumers in contactless delivery of goods, especially to older or vulnerable populations. Delivery companies such as UPS and Walmart have run successful pilot projects to test the viability of last-mile delivery by drones to facilitate social distancing measures. Drones can also speed up and make easier the process of returning unsatisfactory goods.  Initiatives such as the World Economic Forum's {\em Medicine from the Sky} aim  to improve access to healthcare in under-served rural or hard-to-access communities by using drones to deliver medicines and vaccines, and pick up diagnostic samples. Drones are also increasingly being used in disaster relief scenarios, where they can aid in search and rescue efforts, delivering emergency aid, as well as reconnaissance and damage assessment. 

In this paper we consider a system in which the delivery of physical items to clients located in the plane is done by {\em two cooperating mobile agents} having different but complementary properties. 
The first mobile agent, called the {\em drone} can move in any direction but it can travel only a limited distance, called its {\em flying range}, before it needs to recharge its battery. Furthermore, it has limited carrying capacity. 
The second mobile agent, called the {\em truck} can travel only along a fixed trajectory, called a {\em street} but its battery/fuel is not only sufficient to follow the street as long as necessary, but it  is also equipped with a charging facility where the drone can recharge whenever it reaches the truck. Furthermore, it can carry all items that are to be delivered to the clients. 

The delivery of items to clients is done as follows. 
All items to be delivered are preloaded on the truck at the warehouse. The truck then moves along the street at a fixed speed and it delivers items to any client  who is located on its trajectory. The delivery of an item to a client  who  is not located on the trajectory of the truck must be carried out by the drone. At an appropriate time, the drone  flies from the truck with the item to be delivered to the given client, drops the item there, and then  flies back to  the still-moving truck. There it can recharge, pick up  another item, and  make the next  delivery, and so on. Clearly the same set-up can also be used to pick up items rather than deliver them. For ease of exposition, we always talk about item delivery in this paper. 
 
Given a set of delivery locations and the parameters of the agents, i.e., the trajectory and the speed of the truck, the flying range of the drone and its speed,  we want to compute a feasible schedule of deliveries that {\em maximizes the number of deliveries} made.
Such a schedule specifies the order in which the deliveries to clients are 
done by the drone, and for each delivery it gives  the time the drone leaves the truck. Clearly, to be feasible, the schedule should ensure that for each delivery, the drone can fly to the delivery location and back to the still-moving truck while having travelled distance at most its flying range, and arrive at the truck in time to start its next delivery.    

% {\color{red} Should we bring up the issue of outgoing and ingoing deliveries as below?} This schedule could be called an ``outgoing'' schedule since it
% deals with   the  time period when the truck is moving along the street away from the warehouse. After the outgoing deliveries are finished, the truck can turn around, go back to the warehouse and additional ``in-going'' deliveries can be done during that time. Clearly, a calculation of an in-going schedule is very similar to that of an out-going schedule and thus we consider below only the out-going schedule problem. 

\subsection{Related work}

The algorithmic study of truck-drone cooperative delivery problems was initiated by Murray and Chu \cite{murray2015flying} and Mathew et al. \cite{mathew2015optimal} where the problem of a single truck being helped by a single drone to deliver packages to customers is studied. Since then there has been a great deal of work (Murray and Chu's paper has received more than 1000 citations) on different versions of what is variously
referred to as Truck-Drone Cooperative Delivery, Drone-Aided Delivery or Last-Mile Delivery problems. Variations considered include multiple trucks, multiple drones, drone-only delivery, mixed truck-drone delivery, etc. We refer the reader to recent surveys for more details \cite{freitas2023exact,dr2,macrina2020drone,dr3,zhang2023review}.

In the above work, the problem is most often modelled using a weighted directed graph with customers as nodes, streets and
drone flight paths as edges, etc. Under these circumstances the problems become versions of the Travelling Salesperson Problem or the Vehicle Routing Problem. As such they are all easily seen to be NP-hard in general and are solved by adapting known exact (e.g., Mixed Integer Linear Programming) or heuristic (e.g., greedy) techniques. For specialized domains some variants can be shown to be polynomial time, e.g. on trees \cite{erlebach2022package}. 

  In most of the previous research it is assumed that the points at which a truck and drone can rendezvous are part of the input (e.g., customer locations, depots) and that the truck or drone stops at the rendezvous point to wait for the other to arrive. More recent work \cite{khanda2022drone,li2022truck,masone2022multivisit,thomas2023collaborative} has focused on the case where the rendezvous can occur ``en route'' as the truck is moving and the rendezvous points are to be determined by the algorithm, as is the case with our study. In these papers, the problems studied are again generalized versions of TSP or VRP and are attacked via adaptations of known exact or heuristic techniques. Here we restrict ourselves to the simplest version of the problem with one truck and one drone, where the truck travels at a constant speed along a single street. Surprisingly, even in this case, as shown in Section~\ref{sec:np}, the problem is strongly NP-hard. 

All of the above work is concentrated on minimizing either the total delivery time or total energy requirements (or some combination of both) to deliver all of the packages to all of the customers. To the best of our knowledge we are the first to consider the problem of maximizing the number of clients  that are satisfied in the en route model. 

%Papers to be cited? \cite{khanda2022drone,mathew2015optimal,gomez2021new,erlebach2022package,agatz2018optimization,murray2015flying,sorbelli2021cooperative,macrina2020drone,sorbelli2023wind,weng2023cooperative,zhang2023review,boysen2018drone,thomas2023collaborative}.

\subsection{Our Truck-Drone Model}
We define the truck-drone delivery problem more formally as follows. We assume that the delivery points as well as the trajectories  of the truck and the drone, are   set in the 2-dimensional Cartesian plane. Without loss of generality, we assume the warehouse is located at $[0,0]$, and the truck starts fully loaded with all items to be delivered at the warehouse at time 0, and subsequently moves right on the $x$-axis with constant speed 1. Note that this allows us to
measure the elapsed time by the distance of the truck from the origin. 

The speed of the drone is denoted by $v$ and it is assumed that $v$ is a constant that is greater than 1.  The  {\em flying range} of the drone is given by the value $R$, and is defined as the maximum distance that the drone can fly on a full battery without needing to be recharged. We assume that the time to recharge the drone's battery,  and to pick up  an item from the truck, or to drop off an item at its delivery location are negligible compared to the delivery times, and thus are equal to $0$. Therefore, any time the drone leaves the truck it can fly  its full range $R$ before returning to the truck. 

We are given  a multi-set $D=\{d_1,d_2,\ldots,d_n\}$ of delivery points in the plane where the deliveries are to be made. 
%We call $n$ the {\em size} of the problem.  
The truck delivers any item whose delivery point is located 
is on its trajectory, we assume that this can be done with negligible delay. Thus we assume 
below that none of the points  in $D$ is located on the trajectory of the truck,
{\em i.e., on the positive $x$-axis.}

% Furthermore, we assume that the trajectory of the truck is a semi-infinite line originating at $[s,0]$. Without loss of the generality we assume below that
% $[s,0]$ is located at the $x$-axis and the trajectory of the truck is the semi-infinite line 
% coinciding with the $x$-axis to the right of $[s,0]$. The truck is assumed   
% to be  moving along this trajectory at constant speed $v_t$, i.e., without 
% any stopping. 

% To simplify  calculations we scale the given coordinates, the speed of the drone, and its range by the speed of the truck, thus we assume below that  $v_t=1$ and  $v$ is the speed of the drone. This also allows us to
% measure elapsed time by the distance of the truck from $[s,0]$.
% We assume that the time to recharge the drone, i.e., to replace its battery,    and to pick an item on the truck is negligible compared to the delivery times, and thus equal to $0$. Therefore, any tine the drone leaves the truck it can travel its full range $R$. The energy of the truck, i.e., its fuel capacity, is assumed to be sufficient for both, its own travel and the recharging of the batteries as needed.  

We now define a feasible delivery schedule for the truck-drone delivery problem.

\begin{definition}
Given an instance  $I=(v,R,D)$ of the truck-drone problem, where $D=\{d_1,d_2,\ldots,d_n\}$, we define a {\em schedule} $\mathcal{S}_I$ to be an {\em ordered list} of delivery points to which deliveries  are made, and the start time  of each delivery, i.e.,
 $$\mathcal{S}_I=((d_{i_1},s_1),(d_{i_2},s_2),\ldots, (d_{i_m},s_m)), m\leq n$$ where  $m$ is called the {\em length} of the schedule and for $1\leq j\leq m$ 
the drone makes a delivery to $d_{i_j}$ by leaving the truck at point $[s_j,0]$.  
   The  schedule is   {\em feasible}, if 
 $s_1\geq 0$, and for each $j$, $1\leq j\leq m-1$, the drone can reach
$d_{i_j}$ when leaving the truck at position $[s_j,0]$ and return to the truck
at or before  $[s_{j+1},0]$.
\end{definition} 

Schedule  $\mathcal{S}_I$
 is called {\em optimal} if  there is no schedule  that is longer than  $\mathcal{S}_I$, that is, makes more deliveries than $\mathcal{S}_I$. 
 
Given an instance $I=(v,R,D)$ of the truck-drone problem, where  $v$ and $R$ are the speed and the range of the drone respectively, and $D$ is the set of delivery points,  the goal of the truck-drone delivery problem is to find an optimal delivery schedule.

\begin{figure}[ht]
  \centering
\includegraphics[width=0.8\textwidth]{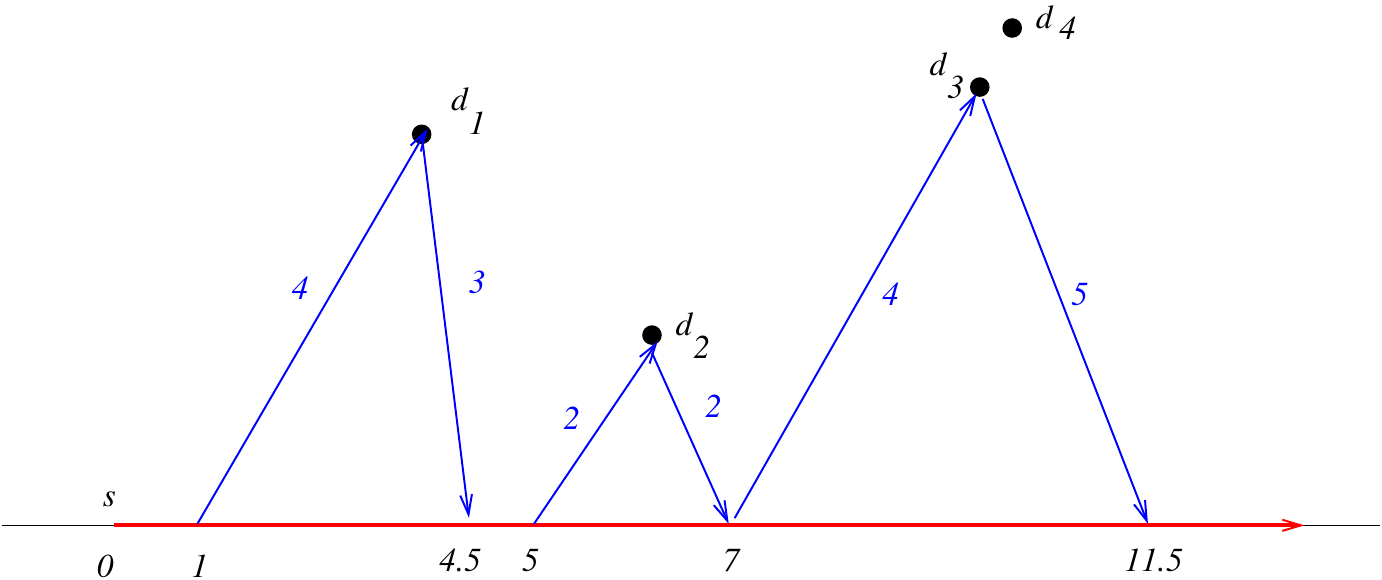}
\caption{ Instance $I=(2,10,\{d_1,d_2,d_3,d_4\})$, and its  schedule $\mathcal{S}_I=((1,d_1),(5,d_2),(7,d_3))$.
The trajectory of the drone is in blue, that of the truck in red. 
The blue numbers give the distances, the black numbers show the time sequence}.
  \label{fig:sch}
\end{figure}

Figure \ref{fig:sch} shows an example of a truck-drone problem and of a feasible schedule.    

\subsection{Our results}

%We first 
In Section~\ref{sec:np}, we show that even for the ostensibly simple case of a single truck travelling on a straight line, and a single drone, the truck-drone delivery problem is  strongly NP-hard. In particular, we show that 
given an instance $I$ of the truck-drone problem and an integer $k$, it
is  strongly NP-hard \cite{NPbook} to decide whether there is a schedule $\mathcal{S}_I$ 
of length $k$.  

In Section~\ref{greedy-alg}, we  describe a greedy algorithm $\mathcal{A}_g$ and show that it computes a 2-approximation of an optimal schedule in $O(n^2)$ time. The factor of $2$ is shown to be tight for this algorithm. 
%The instance that we used in the NP-complete reduction has many delivery points with the same $x$-coordinate.
Finally, in Section~\ref{sec:optimal}, we 
define a {\em \IN} family of instances. Roughly speaking, in such instances, the delivery points do not  have the same  or ``nearly'' the same $x$-coordinates, where ``nearly'' depends on the difference in their $y$-coordinates.  In particular, the greater the difference in the $y$-coordinates of the points, the greater is the difference in their $x$-coordinates in \IN instances. 
%In particular, if all the points have the same $y$-coordinates, they form a proper instance. 
We then give an $O(n^3)$  algorithm  that calculates an optimal schedule
for any \IN instance. 

%  Note that throughout this paper we assume that arithmetic operations, including taking square roots, can be done in constant time.

\section{Preliminary Observations}
\label{prelims}

% \begin{theoremE}[My title][end, restate]
% I am a theorem
% \end{theoremE}
% \begin{proofE}
% And I am a proof.
% \end{proofE}

We say that a point $d=[x,y]$ is {\em reachable} by the drone 
from  position   $ [s,0]$  if the drone can leave the truck at $[s,0]$, fly to point $d$ and fly back to the truck with the total distance travelled at most its flying range $R$.  
First we examine some geometric properties of points in the plane that are reachable from $[s, 0]$ by the drone flying with speed $v$ and having flying range $R$. 

Suppose the drone leaves the truck at position $[s,0]$, makes a delivery at 
$d=[x,y]$ and returns to the truck 
using its {\em full range} $R$. 
To fly range $R$  the drone needs time $t = R/v$ and at that time the truck is at position $[s+R/v,0]$.
\begin{figure}[ht]
\centering
\includegraphics[scale=0.8]{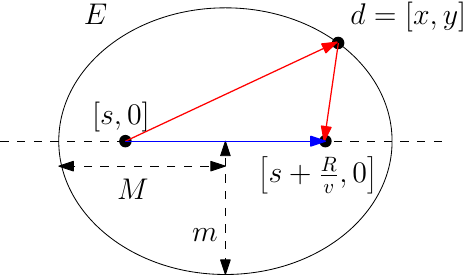}
\caption{In  Ellipse $E$ shown above, the speed of the drone is not much higher than that of the truck. When the speed of the drone increases, the distance between the foci decreases, and the ellipse becomes closer to  a circle.}
\label{fig:ellipse}
\end{figure}
Therefore, the drone can make a delivery at  
point $d=[x,y]$ if the total distance it flew satisfies the equation 
$$|[s,0],[x,y]| +|[x,y], [s+R/v,0]| =R$$
Clearly all such points $d$ reachable by the drone from  $[s,0]$ using its full flying range lie on {\em ellipse} $E$ (see Figure \ref{fig:ellipse}) with 
left focus 
$[s,0]$ and right focus $[s+R/v,0]$. Furthermore, the {\em major radius}, i.e. the length of the 
{\em semi-major axis} of the ellipse is $M=\frac R 2$, and {\em minor radius}, i.e. the length of its 
{\em semi-minor axis} is $m=\frac{R}{2v}\sqrt{v^2 -1}$.
% and the  equation of this ellipse, when centered at $[0,0]$, is 
% \begin{equation}
% \frac{x^2}{M^2} +\frac{y^2}{m^2} = 1
% \label{ellipse}
% \end{equation}
Next, considering also the delivery points that can be reached by the drone by flying distance $<R$, we conclude that all  points reachable from $[s, 0]$ by the drone {\em within} its flying range are located {\em on or inside} the ellipse $E$.

Assuming that the ellipse $E$ is  centered at $[0,0]$, its 
left focus $[s, 0]= [-\frac{R}{2v},0]$, and its right focus is $[\frac{R}{2v},0]$, and $M$, $m$ are the major, minor radii as specified above. The equation of the ellipse is:
\begin{equation}
\frac{x^2}{M^2} +\frac{y^2}{m^2} = 1
\label{ellipse}
\end{equation}
Clearly, delivery to  point $d=[x,y]$ is {\em feasible} 
only if  $-m\leq y\leq m$, i.e., all delivery points should be located in a band of width $2m$ centered along  the $x$ -axis.  

% We will assume in the rest of the paper that in a given instance of the 
% truck-drone problem all points are feasible. Also, we will assume without loss of generality that all delivery points are located in the upper half of the band. We proceed to make several observations based on the the geometric constraints above on the location of the delivery points.

% \begin{observation}
% \label{obs:1}
% Clearly, delivery to  point $d_i=[x_i,y_i]$ is {\em feasible} 
% only if  $-b\leq y_i\leq b$, i.e., all delivery points should be located in a band of width $2b$ centered along  the $x$ -axis.  
% \end{observation}
% We will assume in the rest of the paper that in a given instance of the 
% truck-drone problem all points are feasible. Also, we will assume without loss of generality that all delivery points are located in the upper half of the band. 
 
Assume a delivery point $d$ is on the right half  of ellipse $E$, and 
the drone makes a delivery to $d$ starting from the truck at point $[s',0]$ between the foci of the ellipse $E$. % as shown in Figure  \ref{fig:ellipse2}.
Since the distance from $[s',0]$ to $d$ is shorter than the distance from the left focus $[s, 0]$ of $E$ to $d$, the drone can reach the delivery point $d$, flying for  distance  $<R$.
However,  the drone when leaving the truck at point $[s,0]$ 
arrives at $d$ {\em earlier} than  when staying on the truck and leaving for 
$d$ only later at point  $[s',0]$, and therefore it also returns to the 
truck earlier.  Thus when using flying distance less than $R$ the drone returns to the truck  {\em later} as shown in Figure \ref{fig:ellipse2}.% in the appendix.
\begin{figure}[ht]
\centering
\includegraphics[scale=0.6]{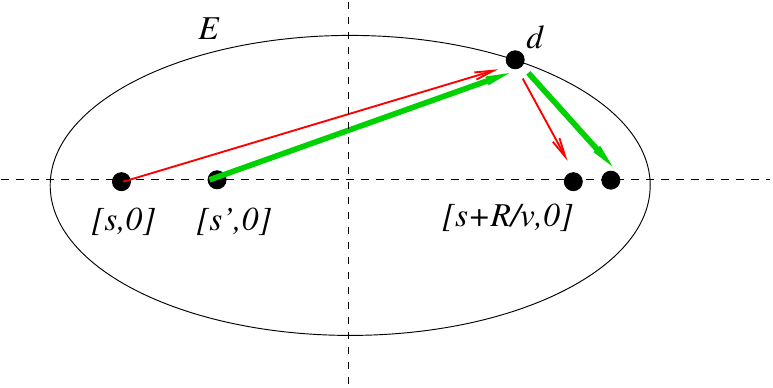}
\caption{ The red lines show the delivery with the full range $R$, the green lines show the delivery with range less than $R$.} 
  \label{fig:ellipse2}
\end{figure}

This leads to the next observation: 
\begin{observation}
 Consider a delivery point $d$ in the right  half  of the ellipse $E$.
  To make a delivery to $d$  flying less that the full range $R$, the drone must start the delivery  at a point to the right  of the left focus of $E$ and  the drone returns to the truck to the right  of the right focus of $E$. Starting points for the drone to the left  of the left  focus are not feasible. 
\label{obs:2}
 \end{observation}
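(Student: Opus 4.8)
The plan is to establish the three assertions contained in the statement: (i) any delivery to $d$ that uses flight distance strictly less than $R$ must leave the truck strictly to the right of the left focus of $E$; (ii) such a delivery must rejoin the truck strictly to the right of the right focus of $E$; and (iii) if the drone leaves the truck strictly to the left of the left focus of $E$, then no delivery to $d$ is possible at all. Throughout I would use the reachable-region description established above: the points the drone can service starting from $[a,0]$ with flight distance at most $\rho\le R$ form exactly the closed region bounded by the ellipse with foci $[a,0]$ and $[a+\rho/v,0]$ and semi-major axis $\rho/2$, and when the full range is used the drone rejoins the truck precisely at the right focus $[a+\rho/v,0]$ of that ellipse. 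Write the foci of $E$ as $[s,0]$ and $[s+R/v,0]$ and its centre as $[s+R/(2v),0]$; since $d=[x,y]$ lies on the right half of $E$ we have $x\ge s+R/(2v)$.

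For (iii): the region reachable from $[s',0]$ within range $R$ is the region bounded by $E$ translated by $(s'-s,0)$, so for $s'<s$ it is that region shifted left by $s-s'>0$. Hence $d$ is serviceable from $[s',0]$ precisely when the right-shifted point $[x+(s-s'),\,y]$ lies in the closed region bounded by $E$, i.e. satisfies the ellipse inequality~\eqref{ellipse} re-centred at the centre of $E$; but the abscissa of this point exceeds $x\ge s+R/(2v)$, so its contribution to the left-hand side of~\eqref{ellipse} is strictly larger than that of $d$ and drives the left-hand side above $1$. Thus $d$ is not serviceable, which is assertion (iii). For (i) it then remains to exclude $s'=s$: if the drone leaves $[s,0]$ using flight distance $\rho<R$, then $d$ lies in the closed region bounded by the ellipse with foci $[s,0]$ and $[s+\rho/v,0]$ and semi-major axis $\rho/2$, and for every point $p$ of that region the triangle inequality through $[s+\rho/v,0]$ gives
\[
|[s,0],p|+|[s+R/v,0],p|\ \le\ \rho+(R-\rho)/v\ =\ \rho(1-1/v)+R/v\ <\ R,
\]
using $\rho<R$ and $v>1$; hence $p$ is strictly inside $E$, contradicting $d\in E$. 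So $s'\ne s$, and together with (iii) this forces $s'>s$.

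For (ii), suppose $s'>s$ (legitimate by (i)) and compare the given delivery with the full-range delivery to $d$ launched from $[s,0]$. Flying straight from $[s,0]$, the drone reaches $d$ at time $t_A=s+|[s,0],d|/v$ and, flying straight back, rejoins the truck at $[s+R/v,0]$. Flying straight from $[s',0]$, it reaches $d$ at time $t_B=s'+|[s',0],d|/v$, and the triangle inequality $|[s',0],d|\ge|[s,0],d|-(s'-s)$ yields
\[
t_B-t_A\ \ge\ (s'-s)-(s'-s)/v\ =\ (s'-s)(1-1/v)\ >\ 0,
\]
so in the given delivery the drone reaches $d$ strictly later. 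It then remains to show that a drone leaving the fixed point $d$ later rejoins the advancing truck farther to the right: if the drone is at $d$ at time $\tau$ (when the truck is at $[\tau,0]$) and flies straight to intercept it, the meeting abscissa $\phi(\tau)$ is the unique $\phi>\tau$ with $\sqrt{(x-\phi)^2+y^2}=v(\phi-\tau)$, and rewriting this as $\tau=\phi-\sqrt{(x-\phi)^2+y^2}/v$ gives
\[
\frac{d\tau}{d\phi}\ =\ 1-\frac1v\cdot\frac{\phi-x}{\sqrt{(x-\phi)^2+y^2}}\ >\ 0
\]
since $v>1$, so $\phi$ is strictly increasing. Consequently the return point of the given delivery is $\phi(t_B)>\phi(t_A)=s+R/v$, which proves (ii). (If the drone does not move along straight segments it only reaches $d$ later and rejoins the truck farther right, so the inequalities are only reinforced.)

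The two displayed estimates are routine triangle-inequality and ellipse-inequality computations; the substance of the argument is part (ii), and within it the monotonicity of the meeting abscissa $\phi(\tau)$ in $\tau$. That is the step I would treat most carefully, since it is the formal counterpart of the intuitive ``leave the truck later, rejoin it farther down the street'' reasoning invoked informally just before the statement.
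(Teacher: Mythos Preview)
Your proof is correct and considerably more detailed than the paper's treatment, which presents this as an observation supported only by the informal paragraph preceding it (the ``leave the truck later, arrive at $d$ later, hence rejoin later'' intuition, together with Figure~\ref{fig:ellipse2}). Your decomposition into (i)--(iii) and the explicit monotonicity computation for the meeting abscissa $\phi(\tau)$ rigorously fill in exactly what the paper leaves to intuition; the underlying idea is the same.
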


A symmetric observation holds about delivery points on the left half of  $E$.  
% By symmetry we have the following.
% \begin{observation}
% \label{obs:3}
%   Consider  a point $d$ on the left part of the ellipse $E$.
%   When the drone does a delivery to $d$  using less that the full range $R$
%   then it starts the delivery  at a point to the left of the left focus of $E$ and  the drone returns to the truck to the left of the right focus of $E$. Departure points for the drone to the right of the left focus are not feasible.   \end{observation}
% By  Observations \ref{obs:2} and \ref{obs:3}, if a delivery to $d$ is 
% included in a  schedule starting  at point [$p',0]$ with  less than full range $R$, and the drone is staying on the truck prior to point  [$p',0]$, then this delivery to $d$ can be scheduled earlier and it then terminates earlier.

We now  determine for each delivery point an interval on the trajectory of the truck describing feasible departure points for the drone to make a delivery to point $d$. 
Given a delivery point $d$, let  $E_1$ and $E_2$ be the ellipses with major radius $M$ and minor radius $m$, such that their  foci are located on the $x$-axis,  with $E_1$ containing
  $d$ on its right half, while $E_2$ contains $d$ on its left half. Let
  $f_{i1}, f_{i2}$ be the foci of $E_i$ for $i \in \{1, 2 \}$ (see Figure \ref{fig:ellipse3}). The following observation now follows from  Observation \ref{obs:2} above.
\begin{observation}
Focus $f_{11}$ is the point of the {\em earliest start} for a delivery to $d$, and  focus $f_{12}$ is the point of the {\em earliest return} to the truck from a delivery to $d$.  
Focus $f_{21}$ is the point of the {\em latest start} for a delivery to $d$ that can meet the truck,  
and Focus $f_{22}$ is the  {\em latest return} to the truck from any delivery
to $d$. Feasible start points for delivery to $d$ lie between $f_{11}$ and $f_{21}$, with the corresponding return to the truck occurring between $f_{12}$ and $f_{22}$.
 \label{obs:focus}
\end{observation}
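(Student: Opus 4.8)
The plan is to reduce the whole statement to the reachability fact established above: leaving the truck at $[s,0]$, the drone can serve $d$ and return within its range if and only if $d$ lies on or inside the ellipse $E(s)$ with foci $[s,0]$ and $[s+R/v,0]$ and semi-major axis $M=R/2$; and, when this holds, the return position $[r(s),0]$ is the unique solution $r$ of $|[s,0],d|+|d,[r,0]|=v\,(r-s)$ --- unique because its left side minus its right side decreases strictly in $r$ (the rate of change of $|d,[r,0]|$ never exceeds $1$, whereas $v>1$). The point to exploit is that every $E(s)$ is a horizontal translate of the fixed ellipse of equation~(\ref{ellipse}). With this in hand I would (i) identify exactly which departure positions are feasible, (ii) show that $s\mapsto r(s)$ is strictly increasing, and (iii) evaluate the two endpoints.

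Step (i). Write $d=[x,y]$ and let $w=M\sqrt{1-y^2/m^2}$ be the half-width, at height $y$, of the fixed ellipse (we may assume $|y|\le m$, as noted above, else no delivery is possible). Since $E(s)$ is that ellipse translated so as to be centred at $[s+\tfrac{R}{2v},0]$, the point $d$ lies on or inside $E(s)$ exactly when $x$ is within $w$ of $s+\tfrac{R}{2v}$, i.e.\ exactly when $s$ ranges over the closed interval $[\,x-\tfrac{R}{2v}-w,\ x-\tfrac{R}{2v}+w\,]$. Its left endpoint is the unique $s$ for which $d$ is on the boundary of $E(s)$ on the right half --- i.e.\ $E(s)=E_1$ --- so it is the $x$-coordinate of $f_{11}$; its right endpoint is the unique $s$ with $d$ on the boundary on the left half --- i.e.\ $E(s)=E_2$ --- so it is the $x$-coordinate of $f_{21}$, and $f_{11}$ is (weakly) to the left of $f_{21}$ because $w\ge 0$. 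Hence the feasible departure positions are precisely those from $f_{11}$ to $f_{21}$, so $f_{11}$ is the earliest and $f_{21}$ the latest.

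Steps (ii)--(iii). For the monotonicity of $r(\cdot)$ the cleanest route is to implicitly differentiate $F(s,r):=|[s,0],d|+|d,[r,0]|-v(r-s)=0$: here $\partial F/\partial s = v-\dfrac{x-s}{|[s,0],d|}>0$ and $\partial F/\partial r = \dfrac{r-x}{|d,[r,0]|}-v<0$, so $dr/ds=-(\partial F/\partial s)/(\partial F/\partial r)>0$ --- this is exactly the \emph{leave later, return later} phenomenon that underlies Observation~\ref{obs:2} and its left-half counterpart. At the left endpoint $s=f_{11}$ the point $d$ sits on $E_1=E(s)$, so $v(r(s)-s)=R$, i.e.\ $r(s)=s+R/v$, which is precisely $f_{12}$; the same computation at the right endpoint $s=f_{21}$ gives $r(s)=f_{22}$. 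Combining, $r(\cdot)$ is a strictly increasing map from the feasible departures (those between $f_{11}$ and $f_{21}$) whose image is the interval of positions from $f_{12}$ to $f_{22}$; in particular its minimum is $f_{12}$ and its maximum is $f_{22}$. This establishes all four assertions.

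The only step carrying real content is the monotonicity $dr/ds>0$ in step (ii); geometrically this is Observation~\ref{obs:2} (together with its symmetric version), and analytically it is the one-line sign check above, so I anticipate no genuine obstacle. Everything else is bookkeeping about horizontal translates of the single ellipse~(\ref{ellipse}), and the main care needed is stating the translate-to-interval correspondence in step (i) precisely.
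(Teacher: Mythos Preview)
Your argument is correct. The paper itself treats this statement as an ``Observation'' and offers no formal proof beyond the sentence ``The following observation now follows from Observation~\ref{obs:2} above''; the monotonicity underlying Observation~\ref{obs:2} is in turn justified only by the informal geometric remark that leaving later means arriving at $d$ later and hence returning later. Your proposal makes the same logical skeleton explicit but replaces the informal geometry with two clean analytic computations: the translate-to-interval correspondence in step~(i), which pins down the feasible departure set as $[f_{11},f_{21}]$ directly (and in fact reproduces the formulas of Observation~\ref{obs:5}), and the implicit-differentiation sign check in step~(ii), which is a rigorous one-line proof of the ``leave later, return later'' monotonicity the paper only asserts. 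What the paper's approach buys is brevity and geometric intuition; what yours buys is a self-contained argument that does not rely on the reader accepting the heuristic before Observation~\ref{obs:2}, and that simultaneously yields the strict monotonicity of $r(\cdot)$ (hence that the return interval $[f_{12},f_{22}]$ is hit bijectively), which the paper never states explicitly.
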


\begin{figure}[ht]
\centering
\includegraphics[scale=0.8]{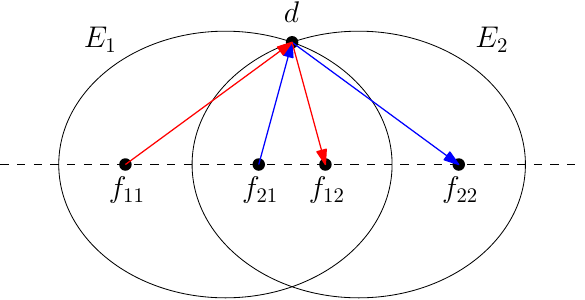}
\caption{ The red lines show the earliest delivery to $d$, the blue lines show the latest delivery to $d$. A delivery to $d$ could be scheduled to start at
a point between $f_{11}$ and $f_{21}$.}
  \label{fig:ellipse3}
\end{figure}
In the rest of this paper, given a delivery point $d$ we denote its earliest  start  time 
as $es(d)$  and the corresponding earliest return as $er(d)$, the latest start time of $d$ as $ls(d)$, and
the corresponding latest return back to the truck as $lr(d)$, 

Notice that for any delivery point $d$ we have 
$$er(d) -es(d)=lr(d)-ls(d)=R/v$$ the distance between the foci of $E$.\\

Given a point $d=[x,y]$, we can calculate the values $es(d)$, $ls(d)$ as follows. Imagine a horizontal line passing through $d$. It intersects the ellipse $E$ centered at $0$ at two points $[-x', y]$ and $[x',y]$. According to Equation~\eqref{ellipse}, we have $(x')^2/M^2 + y^2/m = 1$. Therefore, $x'= M\sqrt{(1 -\frac{y^2}{m^2})}$. Now, imagine sliding the ellipse $E$ along the $x$-axis. When $E$ touches $d$ for the first time, we obtain $E_1$ having travelled distance $x-x'$. Similarly, when $E$ touches $d$ for the last time, we obtain $E_2$ having travelled distance $x+x'$. Thus, we have:
\begin{observation}
\label{obs:5} For $d=[x,y]$
%One way to visualize $E_1$ and $E_2$ is to imagine an ellipse $E$ sliding along the $x$-axis from far left to far right %using Equation \ref{ellipse} as follows. 
%\\
%Let $x'= M\sqrt{(1 -\frac{y^2}{m^2})}$. The point $[x',y]$ 
%is the point on $E$ obtained by a horizontal shift from  $d$. 
%Thus,

$es(d) = x-\frac{R}{2v}-x'$, and $er(d) = es(d)+R/v,$

$ls(d) = x-\frac{R}{2v}+x'$, and $lr(d) = ls(d) +R/v.$
\end{observation}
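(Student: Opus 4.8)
The plan is to reduce the whole statement to one geometric fact: for a fixed delivery point $d=[x,y]$, the set of truck positions $s$ from which the drone can reach $d$ within range is exactly the closed interval $[\,x-\tfrac{R}{2v}-x',\ x-\tfrac{R}{2v}+x'\,]$ with $x'=M\sqrt{1-y^2/m^2}$. The two start-time formulas are then the two endpoints of this interval, and the two return-time formulas follow by combining the full-range travel time with the monotonicity already recorded in Observation~\ref{obs:2}.

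First I would set up the family of ``full-range'' ellipses. From the discussion preceding Equation~\eqref{ellipse}, the locus of points the drone reaches from $[s,0]$ using exactly its full range $R$ is the ellipse $E(s)$ obtained by translating the centered ellipse $E$ of Equation~\eqref{ellipse} rightward so that its left focus moves from $[-\tfrac{R}{2v},0]$ to $[s,0]$; i.e.\ $E(s)$ is $E$ shifted right by $s+\tfrac{R}{2v}$. As noted just before Observation~\ref{obs:2}, a point is reachable from $[s,0]$ \emph{within} range iff it lies on or inside $E(s)$. Intersecting the horizontal line through $d$ with the centered ellipse $E$ and using Equation~\eqref{ellipse} gives the two points $[\pm x',y]$ with $x'=M\sqrt{1-y^2/m^2}$ (this uses $-m\le y\le m$, which is necessary for feasibility). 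Translating, $d$ lies on or inside $E(s)$ precisely when $\bigl|\,x-s-\tfrac{R}{2v}\,\bigr|\le x'$, i.e.
\[
x-\tfrac{R}{2v}-x' \;\le\; s \;\le\; x-\tfrac{R}{2v}+x'.
\]

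Now I would read off the formulas. By Observation~\ref{obs:focus}, $es(d)$ is the leftmost feasible start and $ls(d)$ the rightmost, so these are exactly the two endpoints above, yielding $es(d)=x-\tfrac{R}{2v}-x'$ and $ls(d)=x-\tfrac{R}{2v}+x'$. For the returns, note that at $s=es(d)$ the point $d$ lies on the boundary $E(es(d))$, so the drone is forced to fly its full range $R$; this takes time $R/v$, during which the truck advances by $R/v$, so the drone returns at $es(d)+R/v$. Observation~\ref{obs:2} says that any strictly later departure (hence a shorter flight) pushes the rendezvous strictly to the right, so $es(d)+R/v$ is genuinely the \emph{earliest} return, giving $er(d)=es(d)+R/v$; the mirror-image argument on the left half of the ellipse gives $lr(d)=ls(d)+R/v$. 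I do not expect a serious obstacle here — the start-time formulas are a one-line intersection computation with Equation~\eqref{ellipse}. The only point needing care is exactly this last one: arguing that $es(d)+R/v$ and $ls(d)+R/v$ are the extreme return times rather than merely \emph{some} return times, which is precisely the monotonicity content of Observation~\ref{obs:2}.
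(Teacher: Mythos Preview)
Your proposal is correct and follows essentially the same route as the paper: intersect the horizontal line $y=\text{const}$ with the centered ellipse to get $x'=M\sqrt{1-y^2/m^2}$, then slide the ellipse along the $x$-axis so that $d$ sits on its boundary, which places the left foci of the two extreme ellipses at $x-\tfrac{R}{2v}\mp x'$; the return formulas then follow because the foci are $R/v$ apart. Your extra care in invoking Observation~\ref{obs:2} to justify that $es(d)+R/v$ and $ls(d)+R/v$ are truly the \emph{extreme} return times is a nice touch, but the paper already secures this via Observation~\ref{obs:focus}, where $er(d)$ and $lr(d)$ are identified directly as the right foci of $E_1$ and $E_2$.
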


The next lemma gives the return point of the drone to the truck after a delivery to a delivery point $d = [x, y]$, starting from the truck at a position $[s, 0]$.%$ with $es(d) \leq s\leq ls(d)$.

%\begin{lemmaE}[][end, restate]
\begin{lemma}
\label{lem:return_time_from_starting_pos}
Suppose we wish to make a delivery to a delivery point $d=[x,y]$ using the drone, starting from the truck at position $[s,0]$, and returning to the truck at position $[\ret, 0]$. 

%http://tug.ctan.org/tex-archive/macros/latex/contrib/proof-at-the-end/proof-at-the-end.pdf
\begin{enumerate}
\item If $es(d) \leq s\leq ls(d)$.
 
\begin{equation} 
\ret=\ret(s,d,v):=s+\frac{s+av-x+\sqrt{b^2 -s(v^2-1)(b+s+av-x)}}{v^2-1}
\label{return}
\end{equation}
where  $a=\sqrt{y^2+(s-x)^2}$,  $b=sv^2+av-x$.
\item If $s < es(d)$, then $\ret(s, d, v) = er(d)$.
\item If $s > ls(d)$, then delivery is impossible, thus we set $\ret(s,d,v) = \infty$.

\end{enumerate}
%\end{lemmaE}
\end{lemma}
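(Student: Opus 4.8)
The plan is to split into the three stated cases; Case~1 is the only one requiring computation, and Cases~2 and~3 follow directly from the ellipse geometry of Observations~\ref{obs:2} and~\ref{obs:focus}. For Case~1 I would first translate the geometry into travel times. Since the truck's position equals elapsed time, the drone leaves $[s,0]$ at time $s$, flies the straight segment of length $a=\sqrt{(s-x)^2+y^2}$ to $d$, and reaches $d$ at time $s+a/v$; it then flies straight to the rendezvous point $[\ret,0]$, which the truck occupies at time $\ret$. Matching the drone's second flight time $\tfrac1v\sqrt{(\ret-x)^2+y^2}$ with the remaining elapsed time $\ret-s-a/v$ gives the single equation $\sqrt{(\ret-x)^2+y^2}=v(\ret-s)-a$.

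Next I would solve this equation. Squaring and substituting $y^2=a^2-(s-x)^2$ to eliminate $y$, the left side becomes the difference of squares $(\ret-s)(\ret+s-2x)$ and the right side factors as $(\ret-s)\bigl(v^2(\ret-s)-2av\bigr)$; cancelling the common factor $\ret-s$ (nonzero, since the drone returns strictly later than it departs) leaves a linear equation whose solution is $\ret=s+\tfrac{2(s+av-x)}{v^2-1}$. Two short checks then finish the case. First, soundness of the squaring step: plugging the solution back in, the inequality $v(\ret-s)-a\ge0$ reduces to $a(v^2+1)+2v(s-x)\ge0$, which holds because $a\ge|s-x|$ and $(v-1)^2\ge0$, so the solution genuinely satisfies the original unsquared equation (and a convexity argument for $\sqrt{(\ret-x)^2+y^2}$ against the line $v(\ret-s)-a$ shows it is the earliest such rendezvous). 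Second, matching the stated formula: since $d$ does not lie on the $x$-axis we have $y\ne0$, hence $av>v|s-x|\ge|s-x|\ge x-s$ and $s+av-x>0$; and a one-line expansion with $b=sv^2+av-x$ shows $b^2-s(v^2-1)(b+s+av-x)=(s+av-x)^2$, so $2(s+av-x)=(s+av-x)+\sqrt{b^2-s(v^2-1)(b+s+av-x)}$ and the solution is exactly Equation~\eqref{return}.

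For Case~3 ($s>ls(d)$), Observation~\ref{obs:focus} says every feasible departure position for a delivery to $d$ is at most $ls(d)$, but the truck only moves to the right, so every position available to the drone from $[s,0]$ onward exceeds $ls(d)$ and is infeasible; hence no delivery to $d$ is possible and we set $\ret(s,d,v)=\infty$. For Case~2 ($s<es(d)$), the same observation shows the drone cannot depart before position $es(d)$, so from $[s,0]$ it must remain on the truck until $[es(d),0]$; differentiating the Case-1 expression in the departure position gives derivative $\tfrac{v^2+1+2v(s-x)/a}{v^2-1}\ge\tfrac{(v-1)^2}{v^2-1}>0$, so departing as early as possible also returns the drone as early as possible, and thus it departs at $es(d)$, uses the full range $R$, and returns at $es(d)+R/v=er(d)$. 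The one genuinely delicate step is the Case-1 reduction — carrying out the cancellation correctly, checking the sign condition that validates squaring, and recognizing that the radicand in Equation~\eqref{return} is the perfect square $(s+av-x)^2$; everything else is immediate from the ellipse facts already established.
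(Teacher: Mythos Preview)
Your proof is correct, and your route through Case~1 is genuinely different from the paper's. The paper squares the rendezvous equation, expands everything, and obtains the quadratic $(v^2-1)\ret^2-2b\ret+s(b+s+av-x)=0$, then applies the quadratic formula and selects the root with $\ret>s$. You instead substitute $y^2=a^2-(s-x)^2$ immediately after squaring, which lets you factor out $(\ret-s)$ and reduce to a \emph{linear} equation with the single solution $\ret=s+\tfrac{2(s+av-x)}{v^2-1}$. In the process you uncover something the paper does not state: the radicand $b^2-s(v^2-1)(b+s+av-x)$ is the perfect square $(s+av-x)^2$, so Equation~\eqref{return} collapses to this simpler closed form. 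Your approach is more elementary (no quadratic formula) and more informative; the paper's is more mechanical but hides the factorization. You also add two pieces of rigor the paper omits: you check that the squaring step did not introduce a spurious root by verifying $v(\ret-s)-a\ge0$, and for Case~2 you give an explicit monotonicity argument (the derivative $\tfrac{v^2+1+2v(s-x)/a}{v^2-1}\ge\tfrac{(v-1)^2}{v^2-1}>0$) rather than just citing Observation~\ref{obs:2}. Cases~2 and~3 otherwise match the paper's reasoning.
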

%\begin{proofE}
\begin{proof}
To see (1), observe that the  total  distance travelled by the drone is  
$d_1=|[s,0],[x,y]| +|[x,y],[\ret,0]|=
a+\sqrt{(\ret-x)^2+y^2}$, which the drone travels in time $d_1/v$. At the same time the truck travels the distance $d_2=\ret-s$. 
Thus we have the equation
\begin{align*}a+\sqrt{(x-\ret)^2+y^2}&=v(\ret-s)\\
(x-\ret)^2+y^2&=\left( v(\ret-s)-a\right)^2\\
x^2 -2x\ret +\ret^2 +y^2&=v^2(\ret^2-2s\ret+s^2) -2av(\ret-s)  +y^2+(s^2-2sx +x^2)\\
\ret^2 -2x\ret &=v^2\ret^2 -2v^2s\ret -2av\ret+v^2s^2+2avs +s^2-2sx\\
0&=(v^2-1)\ret^2-2(sv^2+av-x)\ret +s(v^2s+s+2av-2x)\\
0&=(v^2-1)\ret^2-2b\ret +s(b+s+av-x),
\end{align*}
and by solving the quadratic equation for $\ret>s$ we have 
\begin{align*}\ret&=\frac{b +\sqrt{b^2 -s(v^2-1)(b+s+av-x)}}{v^2-1}\\
&=s+\frac{s+ av-x+\sqrt{b^2 -s(v^2-1)(b+s+av-x)}}{v^2-1},
\end{align*}
as needed. 

For (2), note that if  $s < es(d)$ then  the drone remains on the truck until 
position $[es(d),0]$ is reached and then it starts  a delivery from position
$[es(d),0]$, since by Observation \ref{obs:2}, this gives the earliest time the drone can start from the truck for a delivery to  point $d$.  Thus for any such $s$ the drone returns to the truck  at position  $[er(d), 0]$.
%$[q(es(d),x,y,v),0]$.

Finally, (3) follows from Observation \ref{obs:focus}.
%\end{proofE}
\end{proof}

For $s$ where   $es(d) \leq s\leq ls(d)$ and a delivery point $d=[x,y]$,   
we call $\ret(s,d,v)-s$ the {\em round-trip flight time} to $d$ from $[s,0]$. 
It can be seen from Formula \ref{return} that   
the round-trip flight time is not a linear function in $s$, which makes a calculation of a schedule for a given instance of the truck-drone problem 
more complicated. 
\begin{observation} 
\label{obs:profile}
For a delivery point $d=[x,y]$ and a point $[s,0]$ 
 between   $es(d)$ and $ ls(d)$,  
the round-trip flight time  $\ret(s,d,v)-s$ reaches the maximal value $R/v$ 
at $s= es(d)$, it decreases until $s=x(1-1/\sqrt{v^2-1})$  and then increases 
until $s= ls(d)$ where   it again reaches the maximal value $R/v$. 
\end{observation}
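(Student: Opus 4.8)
I would avoid differentiating the closed form~\eqref{return} directly, and instead reparametrize each feasible round trip to $d=[x,y]$ by the \emph{midpoint} of the drone's departure and return positions on the $x$-axis, $c:=(s+\ret)/2$. For such a round trip write $T:=\ret-s$ for the round-trip flight time; then the departure is $[\,c-T/2,\,0\,]$, the return is $[\,c+T/2,\,0\,]$, and the truck constraint ``total flight distance $=v\cdot T$'' says precisely that $d$ lies on the ellipse with foci $[\,c\pm T/2,\,0\,]$ and string length $vT$. That ellipse is centred at $[c,0]$ with semi-major axis $vT/2$ along the $x$-axis and semi-minor axis $\tfrac{T}{2}\sqrt{v^2-1}$; substituting $d$ into its equation and clearing denominators collapses everything to the single identity
\[
T=T(c)=\frac{2}{v}\sqrt{(x-c)^2+\frac{v^2y^2}{v^2-1}},
\]
valid along the entire one-parameter family of feasible round trips to $d$.

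Next I would read off the shape of this family. As a function of $c$, $T(c)$ is strictly convex, strictly decreasing for $c<x$, strictly increasing for $c>x$, and minimized at $c=x$. To transfer this to the variable $s$, use $s=c-T(c)/2$: a one-line computation gives $ds/dc=1-\dfrac{c-x}{v\sqrt{(x-c)^2+v^2y^2/(v^2-1)}}>0$, since $v>1$ and the fraction has modulus $<1$. Hence $c\mapsto s$ is a strictly increasing bijection of the feasible $c$-interval onto $[es(d),ls(d)]$. The endpoints of the feasible $c$-interval are exactly where $T(c)=R/v$, i.e.\ where $(x-c)^2=M^2-\tfrac{v^2y^2}{v^2-1}=(x')^2$ (using $M^2/m^2=v^2/(v^2-1)$, with $x'$ as in Observation~\ref{obs:5}), namely $c=x\mp x'$; these map to $s=es(d)$ and $s=ls(d)$ respectively (consistent with Observation~\ref{obs:5}), and $c=x$ lies strictly between them because $x'\ge 0$. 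Composing, $T$ as a function of $s$ is strictly decreasing from $s=es(d)$ up to the start position whose midpoint is $c=x$ (namely $s=x-|y|/\sqrt{v^2-1}$, taking $y>0$ without loss of generality), and strictly increasing from there to $s=ls(d)$.

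It remains to identify the maximum. At $s=es(d)$ and $s=ls(d)$ the drone uses its full range $R$ by the very construction of these points (Observations~\ref{obs:focus} and~\ref{obs:5}), so $T=R/v$ there, while $T(s)\le R/v$ for every feasible $s$ because the drone can never fly more than $R$; hence $R/v$ is the common maximal value, attained exactly at the two endpoints, which completes the argument.

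The step that needs the most care is the reparametrization bookkeeping in the second paragraph: checking that the two boundary ellipses $E_1,E_2$ of Observation~\ref{obs:focus} are precisely those with $c=x\mp x'$, so that the $c$-interval maps onto exactly $[es(d),ls(d)]$, and confirming the monotonicity of $c\mapsto s$. A more computational alternative is to differentiate~\eqref{return} implicitly: with $a=|\,[s,0],d\,|$ and $a'=|\,[\ret,0],d\,|$ one obtains $\ret'(s)-1=\dfrac{(s-x)/a+(\ret-x)/a'}{v-(\ret-x)/a'}$, whose denominator is positive, so the sign of $T'(s)$ equals the sign of $(s-x)/a+(\ret-x)/a'$; one then checks this quantity is negative just to the right of $es(d)$, vanishes once, and is positive just to the left of $ls(d)$. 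The midpoint reparametrization is cleaner, and is the route I would take.
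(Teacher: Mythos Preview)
The paper states this observation without proof, so there is no argument to compare against; your midpoint reparametrization is a clean and complete justification of the qualitative claim (maximum $R/v$ at both endpoints, a unique interior minimum). The derivation of $T(c)=\tfrac{2}{v}\sqrt{(x-c)^2+v^2y^2/(v^2-1)}$ and the monotonicity of $c\mapsto s$ are both correct, and together they give the decrease--then--increase shape immediately.

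One point worth flagging explicitly: your computation locates the minimizer at $s=x-|y|/\sqrt{v^2-1}$ (the $s$ corresponding to the symmetric round trip with midpoint $c=x$), whereas the observation as stated in the paper gives $s=x\bigl(1-1/\sqrt{v^2-1}\bigr)$. These do not agree, and your value is the correct one: for instance, when $x=0$ the paper's formula would place the minimum at $s=0$, but direct substitution (or the special case in Lemma~\ref{lem:np_helper}) shows the round-trip time at $s=0$ is $2v|y|/(v^2-1)$, strictly larger than the symmetric value $2|y|/\sqrt{v^2-1}$ that you obtain. So your argument not only supplies the missing proof but also corrects a typo in the statement.
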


%\begin{lemmaE}[][end,restate]
\begin{lemma}
Let $d=[x,y]$ and $d'=[x',y']$ be two delivery points, and suppose there are valid drone trajectories from $[s,0]$ to $d$ returning at $[r,0]$ and from $[s',0]$ to $d'$ returning at $[r',0]$. If $s' < s < r \leq r'$, then there is also a valid drone trajectory from $[s',0]$ to $d$ returning at a point before $[r,0]$. 
\label{lem:valid-trajectories}
%\end{lemmaE}
\end{lemma}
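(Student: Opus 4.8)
The plan is to reduce the statement to two facts: first, that the earlier departure point $s'$ is itself a feasible departure for a delivery to $d$, i.e.\ $es(d)\le s'\le ls(d)$; and second, that the return map $\ret(\cdot,d,v)$ is strictly increasing on $[es(d),ls(d)]$. Given these two facts, the drone can leave the truck at $[s',0]$, fly straight to $d$ and straight back, rejoining the truck at $[\ret(s',d,v),0]$, where $\ret(s',d,v)<\ret(s,d,v)\le r$; this is the claimed trajectory.

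For the first fact I would bound $s'$ from below by exploiting $r\le r'$. Since the given trajectory visits $d$ and rejoins the truck at $[r,0]$, the point $[r,0]$ is a valid return point of a delivery to $d$, so $r\ge er(d)=es(d)+R/v$ by Observations~\ref{obs:focus} and~\ref{obs:5}. On the other hand, the drone performing the delivery to $d'$ leaves the truck at $[s',0]$ and flies, at speed $v$ over a path of length at most $R$, until it rejoins the truck at $[r',0]$; hence $r'-s'\le R/v$. Chaining these with $r\le r'$ yields $es(d)+R/v\le r\le r'\le s'+R/v$, so $es(d)\le s'$. With the hypothesis $s'<s$ and the fact $s\le ls(d)$ (a valid delivery to $d$ must depart no later than $ls(d)$, Observation~\ref{obs:focus}), we get $es(d)\le s'<ls(d)$, so Lemma~\ref{lem:return_time_from_starting_pos}(1) applies to the departure point $s'$ and produces the valid straight-there-and-back trajectory returning at $[\ret(s',d,v),0]$.

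For the second fact, observe first that the straight-there-and-back route rejoins the truck no later than any other route (going straight is fastest both outbound and on the return), so $\ret(s,d,v)\le r$. To see monotonicity, I would factor $\ret(s,d,v)$ through the arrival time at $d$: write $\ret(s,d,v)=\rho(\tau(s))$, where $\tau(s):=s+\sqrt{(s-x)^2+y^2}/v$ is the time the drone reaches $d$, and $\rho(\tau)$ is the rejoining position of a drone located at $d$ at time $\tau$ that then flies straight to intercept the truck, i.e.\ the larger root of $v^2(\rho-\tau)^2=(\rho-x)^2+y^2$. Then $\tau'(s)=1+\frac1v\cdot\frac{s-x}{\sqrt{(s-x)^2+y^2}}\ge 1-\frac1v>0$, so $\tau$ is strictly increasing; and implicit differentiation gives $\rho'(\tau)=\frac{v^2(\rho-\tau)}{v^2(\rho-\tau)-(\rho-x)}$, where both numerator and denominator are positive because $v(\rho-\tau)=\sqrt{(\rho-x)^2+y^2}>|\rho-x|$ (as $y\neq 0$) and $v>1$ give $v^2(\rho-\tau)>v|\rho-x|>|\rho-x|\ge \rho-x$. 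So $\rho$ is strictly increasing, hence so is $\ret(\cdot,d,v)=\rho\circ\tau$, and $s'<s$ gives $\ret(s',d,v)<\ret(s,d,v)$.

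I expect the monotonicity of $\ret(\cdot,d,v)$ to be the only step that is not routine: Observation~\ref{obs:profile} only tells us that the round-trip flight time $\ret(s,d,v)-s$ is \emph{non}-monotone in $s$, so one really does need the separate factorization argument above to conclude that $\ret(s,d,v)$ itself increases with $s$. The remaining ingredients are immediate --- $r\ge er(d)$ and $er(d)=es(d)+R/v$ come straight from the earlier observations, and $r'\le s'+R/v$ is just the fact that the drone is airborne for at most $R/v$ time units per trip; note that it is precisely the hypothesis $r\le r'$ that forces $s'\ge es(d)$, ruling out the degenerate situation in which $s'$ lies left of $es(d)$ and the drone, having to ride the truck up to $es(d)$, could only rejoin at $er(d)$ and not strictly before $r$.
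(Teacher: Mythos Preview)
Your argument is correct, but it proceeds quite differently from the paper's own proof. The paper argues geometrically: it lets $E_1$ and $E_2$ be the ellipses with foci $[s',0],[r',0]$ and $[s,0],[r,0]$ respectively (so that $d'\in E_1$ and $d\in E_2$), observes that $s'<s<r\le r'$ forces $E_2\subset E_1$, hence $d$ lies inside $E_1$, which in turn lies inside the full-range ellipse from $[s',0]$ since $R_1\le R$. Feasibility of delivering to $d$ from $[s',0]$ then follows immediately, and the earlier return is asserted informally (``the drone reaches $d$ earlier, so it returns earlier''). Your route instead establishes $es(d)\le s'$ directly via the chain $es(d)+R/v\le r\le r'\le s'+R/v$, and then proves the monotonicity of $s\mapsto\ret(s,d,v)$ rigorously by factoring through the arrival time $\tau(s)$ at $d$ and differentiating. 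Your version is more analytic and leaves nothing to intuition---in particular, it supplies the missing justification for the paper's ``arrives earlier $\Rightarrow$ returns earlier'' step, and it never needs the (unjustified and in fact unnecessary) claim in the paper that $d$ and $d'$ sit on the right halves of their ellipses. The paper's version, on the other hand, is shorter and more visual, and the ellipse-containment picture makes the feasibility of the $s'$-departure obvious at a glance.
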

\begin{proof}
%\begin{proofE}
Let $R_1$ be the length of the drone trajectory from $[s',0]$ to $d'$ and then to $[r',0]$, and similarly, let $R_2$ be the length of the drone trajectory from $[s,0]$ to $d$ and then to $[r,0]$. Then $R_1/v$ and $R_2/v$ respectively are the distances from $[s',0]$ to $[r',0]$ and from $[s,0]$ to $[r,0]$. Since $s' < s < r \leq r'$, it follows that $R_2 < R_1$. Now consider the ellipse $E_1$ with parameters $(R_1, v)$ with $[s',0]$ as its left focus. Then $d'$ is on the right half of $E_1$, and $[r',0]$ must be its right focus. Similarly, let $E_2$ be the ellipse with parameters $(R_2, v)$ with $[s,0]$ and $[r,0]$ as its left and right foci respectively, and with $d$ on the right half of the ellipse. Since $s' < s < r \leq r'$, the ellipse $E_2$ is completely contained in $E_1$, and the point $d$ is in the interior of the ellipse $E_1$. It follows that there is a valid drone trajectory to $d$ starting at $[s',0]$. Furthermore, since the drone reaches $d$ earlier if it starts at $[s',0]$ than if it stayed on the truck until $[s,0]$ and then flew to $d$, it must also return to the truck earlier than $[r,0]$. 
% Suppose for the purpose of contradiction that there is no valid drone trajectory from $[s',0]$ to $d$. Let $R'$ be the length of the drone trajectory from $[s',0]$ to $d'$ and then to $[r',0]$. Then $R'/v$ is the distance between $[s',0]$ and $[r',0]$. Consider the ellipse $E_1$ corresponding to $(R', v)$ with $[s',0]$ as its left focus. Then $d'$ is on the right half of $E_1$, and $[r',0]$ must be its right focus. By assumption $d$ does not lie in or on $E_1$. Also  since there is a trajectory from $[s,0]$ to $d$ to $[r,0]$ of length $<R'$,  it must be that $y\leq y'$, and $x > x'$. Now consider the translated ellipse $E_2$ so that $d$ lies on this ellipse, and let its left focus be $f_1$ and right focus be $f_2$.  Then $s' < f_1 < r' < f_2$. Since the trajectory from $[s,0]$ to $d$ to $[r,0]$ is shorter than $R'$, it must be that $[s,0]$ lies to the right of $f_1$ and by Observation \ref{obs:2}, $r>f_2$,  a contradiction to $r<r'$.
%\end{proofE}
\end{proof}

In the truck-drone instance that we use in the proof of strong NP-hardness in Section~\ref{sec:np},  many of the delivery points are located on the $y$ axis. For these points we can simplify 
the expression used to define function $\ret(s, d,v)-s$, and this simplified expression  is used to obtain upper and lower  bounds on  $\ret(s, d,v)-s$.

%\begin{lemmaE}[][end,restate]
\begin{lemma}
\label{lem:np_helper}
 For $s \ge 0$ and a delivery point $d=[0,y]$ with $v/4 \le y \le v/2$ we have 
% \begin{equation}
%2y/v  +2\frac{y/v +s}{v^2-1}<\ret(s,d,v)-s<2y/v + 2s/v +2\frac{y/v + s/v +s}{v^2-1} 
\[\frac{2y}{v} < \ret(s,d,v) - s < \frac{2y}{v} + \frac{1+4s^2 + s}{v^2-1}.\]
%\label{inequality}
%\end{equation} 
\end{lemma}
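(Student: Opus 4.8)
The plan is to first collapse $\ret(s,d,v)-s$ into a clean closed form by exploiting that $d$ has $x$-coordinate $0$, and then to sandwich it between $\tfrac{2y}{v}$ and $\tfrac{2y}{v}+\tfrac{1+4s^2+s}{v^2-1}$ using only the elementary estimates $y\le\sqrt{y^2+s^2}\le y+\tfrac{s^2}{2y}$. For the closed form: since $d=[0,y]$, Observation~\ref{obs:5} gives $es(d)=-\tfrac{R}{2v}-x'<0\le s$, so for every feasible $s$ (that is, $s\le ls(d)$) we are in case~(1) of Lemma~\ref{lem:return_time_from_starting_pos}. Substituting $x=0$ into \eqref{return} we have $a=\sqrt{y^2+s^2}$ and $b=sv^2+av$, and the crux of this step is that the radicand is a perfect square,
\[
b^2-s(v^2-1)\bigl(b+s+av-x\bigr)=a^2v^2+2asv+s^2=(av+s)^2 ,
\]
which I expect to drop straight out of expanding $(sv^2+av)^2$ and $s(v^2-1)(sv^2+s+2av)$ and watching the $s^2v^4$ and $2asv^3$ terms cancel. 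Plugging $\sqrt{\,\cdot\,}=av+s$ back into \eqref{return} then gives the simplified expression promised just before the lemma,
\[
\ret(s,d,v)-s=\frac{2\bigl(s+v\sqrt{y^2+s^2}\bigr)}{v^2-1}.
\]

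Both inequalities are then read off from this formula. For the lower bound, $\sqrt{y^2+s^2}\ge y$ yields $\ret(s,d,v)-s\ge\tfrac{2(s+vy)}{v^2-1}$, and since $v>1$ the inequality $\tfrac{2(s+vy)}{v^2-1}>\tfrac{2y}{v}$ reduces, after clearing the positive denominators, to $vs+y>0$, which is clear. For the upper bound I would rewrite
\[
\ret(s,d,v)-s-\frac{2y}{v}=\frac{2vs+2v^2\bigl(\sqrt{y^2+s^2}-y\bigr)+2y}{v(v^2-1)},
\]
bound the middle term using $\sqrt{y^2+s^2}-y=\tfrac{s^2}{\sqrt{y^2+s^2}+y}\le\tfrac{s^2}{2y}$, and then invoke the hypothesis $v/4\le y\le v/2$: the first summand equals $\tfrac{2s}{v^2-1}$, the second is at most $\tfrac{vs^2}{y(v^2-1)}\le\tfrac{4s^2}{v^2-1}$, and the third is at most $\tfrac{2y}{v(v^2-1)}\le\tfrac{1}{v^2-1}$; summing the three gives the upper bound in the statement.

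The one step here that is not pure bookkeeping is the perfect-square identity: once one sees that $x=0$ forces the discriminant in \eqref{return} to be exactly $(av+s)^2$, the rest is routine. The part that needs the most care is the upper bound, where each of the three terms in the displayed rewriting must be estimated sharply enough that they add up to the right-hand side of the statement — and this is precisely where both inequalities in $v/4\le y\le v/2$ are used, the lower bound on $y$ to control the $s^2$-term and the upper bound on $y$ to control the constant term.
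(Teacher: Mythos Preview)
Your proof is correct and follows essentially the same approach as the paper's: derive the closed form $\ret(s,d,v)-s=\dfrac{2(s+v\sqrt{s^2+y^2})}{v^2-1}$, then sandwich it using $y\le\sqrt{s^2+y^2}\le y+\tfrac{s^2}{2y}$ together with $v/4\le y\le v/2$. The only cosmetic difference is that you obtain the closed form by substituting $x=0$ into~\eqref{return} and spotting the perfect square, whereas the paper re-derives it directly from the time-balance equation; note also that your three summands actually add to $\tfrac{1+4s^2+2s}{v^2-1}$ rather than the stated $\tfrac{1+4s^2+s}{v^2-1}$ --- a harmless slip that the paper's own computation shares.
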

%\end{lemmaE}

%\begin{proofE} 
\begin{proof}
Let $\Delta s := \ret(s,d,v) - s$. Then the distance travelled by the drone is $\sqrt{s^2 + y^2} + \sqrt{(s+\Delta s)^2 + y^2}$. Since the drone travels at speed $v$, the time taken by the drone is then 
\[\frac{\sqrt{s^2 + 2y^2} + \sqrt{(s+\Delta s)^2}}{v}.\] 
During the delivery, the truck travels distance $\Delta s$ at speed $1$ taking the time $\Delta s$. Equating the two times we get:
\[\frac{\sqrt{s^2 + y^2} + \sqrt{(s+\Delta s)^2 + y^2}}{v} = \Delta s.\]
 Solving for $\Delta s$, we obtain:
 \[\Delta s = \frac{2(v \sqrt{s^2 + y^2} + s)}{v^2-1}.\]
 From this expression we immediately obtain the lower bound on $\Delta s$ using $s \ge 0$:
 \[\Delta s \ge \frac{2vy}{v^2-1} \ge \frac{2y}{v}.\]
 Next observe that $\sqrt{s^2 + y^2} \le y + \frac{s^2}{2y}.$ Plugging this inequality into the expression for $\Delta s$ we obtain:
 \begin{align*}
     \Delta s &\le \frac{2(v(y + s^2/2y)+s)}{v^2-1} = \frac{2vy + \frac{v}{y} s^2 + 2s}{v^2-1} = \frac{2\left(1-\frac{1}{v^2}\right) vy + 2\frac{1}{v^2} vy+ \frac{v}{y} s^2 + 2s}{v^2-1}\\
     &= \frac{2y}{v} + \frac{2\frac{y}{v} + \frac{v}{y} s^2 + s}{v^2-1} \le \frac{2y}{v} + \frac{1 + 4s^2 + s}{v^2-1},
 \end{align*}
 where in the last inequality we used the fact that $v/4 \le y \le v/2$.
 \end{proof}

\section{Strong NP-hardness}
\label{sec:np}

In this section we prove that the following decision problem is strongly NP-hard:
% \begin{problem}({\bf Schedule Length} problem)
\begin{problem}[Schedule Length problem]
Given an instance ${I}$ of the truck-drone problem, and an integer $p$, is there a schedule $\mathcal{S}_{I}$ of length $p$ (that is, $\mathcal{S}_{I}$ makes $p$ deliveries)?
\end{problem}
We show below that there is a polynomial reduction from the well known 
{\em $3$-Partition} problem~\cite{NPbook} to the  Schedule Length problem. Recall that in the $3$-Partition problem we are given a multi-set of integers $Y = \{y_1 \le y_2 \le \cdots \le y_n\}$, where $n = 3k$. Let $T = \sum_{i=1}^n y_i/k$. 
% We can assume that the values in $Y$ are bounded from above by a polynomial in $n$, and thus $n^c < T \le n^{c+1}$ for some constant $c$, which can be taken to be sufficiently large. 
The $3$-Partition problem asks if there is a partition of $Y$ into $k$ triples, such that the sum of elements in each triple is equal to $T$. The $3$-Partition problem is strongly NP-hard ~\cite{NPbook}.

%\begin{theoremE}[][end,restate]
\begin{theorem}
\label{thm:3part}
The Schedule Length problem  is strongly NP-hard.
\end{theorem}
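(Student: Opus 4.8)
The plan is to reduce from the strongly NP-hard $3$-Partition problem, given as a multi-set $Y = \{y_1 \le \cdots \le y_n\}$ with $n = 3k$ and target sum $T = \sum_i y_i / k$. The key idea is to construct a truck-drone instance in which the drone must make exactly $n+1 = 3k+1$ ``separator'' deliveries spaced roughly $T$ apart along the truck's trajectory, and the only way to squeeze three more deliveries between consecutive separators is if the corresponding item-point flight times sum to at most the available gap, which happens precisely when a triple sums to $T$.

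Concretely, I would place the $3$-Partition ``item'' delivery points on the $y$-axis at heights chosen so that, by Lemma~\ref{lem:np_helper}, the round-trip flight time to the point for $y_i$ is essentially $2y_i/v$ up to an additive error that I can make negligibly small (this is why the lemma is stated with the $(1+4s^2+s)/(v^2-1)$ slack term — choosing $v$ polynomially large in the instance makes this error term less than the smallest ``unit'' in the rescaled $3$-Partition instance). Because $3$-Partition is \emph{strongly} NP-hard, all the $y_i$ are polynomially bounded, so a polynomially large $v$ keeps the construction of polynomial size; this is the crucial place where strong NP-hardness is used. The separator points I would place far from the $y$-axis (large $x$-coordinates, at regular spacings that are multiples of $\approx T$), with heights/positions engineered so that (i) each separator must be served, (ii) the separators must be served in left-to-right order, and (iii) the feasible start-interval $[es, ls]$ of the $j$-th separator forces the drone to be back on the truck by a fixed position $P_j$, creating $k$ disjoint ``windows'' each of length just enough to fit exactly three item-deliveries iff their flight times sum to $T$.

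The forward direction is then routine: given a valid $3$-partition into triples summing to $T$, schedule the separators at their canonical times and interleave each triple's three item deliveries in the corresponding window — feasibility follows from the flight-time bounds. For the reverse direction, I would argue that any schedule of the target length $p = 3k+1 + n = 6k+1$ (all separators plus all items) must, by a counting/pigeonhole argument over the disjoint windows together with the near-exact flight times, induce a partition of the items into triples each of total flight time $\le$ window length, and the arithmetic (sum of all flight times equals $k$ windows' worth) forces equality, hence each triple sums to $T$. I'd also need to rule out degenerate schedules that skip a separator or serve items ``out of window''; this is handled by making separators geometrically unavoidable and windows genuinely disjoint.

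The main obstacle I anticipate is the geometric design of the separator gadget: I need the feasible departure interval of each separator to be short and positioned so that serving separator $j$ consumes a precisely controlled stretch of the trajectory, leaving exactly $3T/v + o(1)$ of ``free'' truck-travel between consecutive separators — no more, no less — while simultaneously ensuring the item points (all on the $y$-axis) can only be profitably served inside one designated window. Getting the constants to line up, and proving that no clever schedule can ``borrow'' time across windows or substitute an item for a separator, is where the real work lies; the flight-time estimates of Lemmas~\ref{lem:return_time_from_starting_pos}, \ref{lem:valid-trajectories}, and~\ref{lem:np_helper}, plus Observations~\ref{obs:focus} and~\ref{obs:profile}, are exactly the tools I would lean on to nail down these bounds.
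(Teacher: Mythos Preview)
Your proposal is correct and follows essentially the same route as the paper: reduce from $3$-Partition, put the item points on the $y$-axis so that Lemma~\ref{lem:np_helper} makes each flight time $\approx 2y_i/v$, take $v$ polynomially large (the paper sets $v=T$) so the error term is swamped, and use separator points to carve out $k$ windows of length just over $2$ each. The one design detail you leave open --- how to make each separator consume a rigidly fixed stretch of trajectory --- is resolved in the paper by placing separators at height exactly $m$ (the minor radius), so each is reachable from a \emph{unique} truck position and costs exactly $R/v$; a trailing block of $T{+}1$ such points after the last window then prevents any item delivery from being deferred past the separator region.
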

%\end{theoremE}    

%\begin{proofE} 
\begin{proof}
We prove the theorem by exhibiting a reduction from a $3$-Partition instance $Y=\{y_1, y_2, \ldots, y_n \}$ to an instance $I$ of the Schedule Length problem. We use the notation for the $3$-Partition instance $Y$ introduced immediately prior to the statement of the theorem. We assume that $n$ is sufficiently large;  the values in $Y$ are bounded from above by a polynomial in $n$, so that $n^c < T \le n^{c+1}$ for a sufficiently large constant $c$.

We now define the corresponding instance $I$ of the Schedule Length problem as follows. The speed of the drone is set to $v = T$ and the flying range of the drone is set to $R=4T$. Then the minor radius of the ellipse corresponding to the speed and range of the drone is $m = 2\sqrt{T^2-1}$.  

For this proof, we depart from our convention of the truck starting at $[0,0]$ and instead specify the starting position of the truck as $[2, 0]$ (this does not affect the complexity of the problem, but makes some of the formulas nicer). 
The set of delivery points $D$ is partitioned into three subsets called $A, B$ and $C$, that are defined below:

\noindent $A = \{[0, y_1], [0, y_2], \ldots, [0, y_n]\}$ is a set of delivery points located on the $y$-axis and correspond to the inputs to the $3$-Partition problem.

\noindent $B = \{[6 + \epsilon(n), m], [2(6+\epsilon(n)), m], \ldots, [(k-1)(6+\epsilon(n)), m]\}$ and 

\noindent $C =\{[k(6+\epsilon(n)),m],[k(6+\epsilon(n))+4,m],\ldots,[k(6+\epsilon(n))+4T,m]\}$ 

\noindent are sets of delivery points that are located at distance $m$ from the $x$-axis and 
 $\epsilon(n) \in(0,1)$ is a function of $n$ to be specified later.
 
 Observe that each delivery point in $B \cup C$ can be reached by the drone from exactly one location  on the $x$-axis, and the drone must fly its full range $R=4T$ to make the delivery and return to the truck, and therefore, each such delivery takes time $R/v=4$. 
See Figure~\ref{fig:3part} for an illustration of the instance $I$ produced by the reduction, as well as the unique feasible drone trajectories for delivery points in $B$ and $C$.

In total there are $n + (k-1) + T+1$ delivery points, and we set $p = n + (k-1) + T+1$ in the Schedule Length problem instance. In other words, this instance asks whether there is a schedule that delivers to {\em all} the delivery points. Observe that the number of points and their coordinates are all bounded by a polynomial in $n$, so the reduction runs in polynomial time. 

\begin{figure}[ht]
\centering
\includegraphics[scale=0.8]{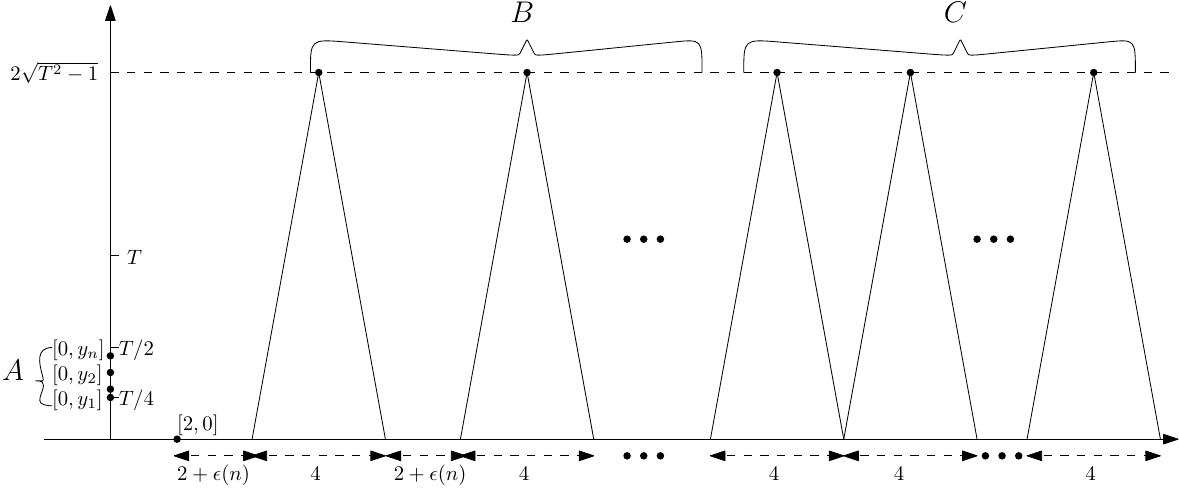}
\caption{ Illustration of  the Schedule Length  instance $I$ output by the reduction from the $3$-partition problem. The unique feasible drone trajectories for delivery points in $B$ and $C$ are also shown.}
  \label{fig:3part}
\end{figure}

We claim that the instance $Y$ to the $3$-Partition problem is a yes-instance if and only if $I$ is a yes-instance to the Schedule Length problem. 
%First, observe that because delivery points in sets $B$ and $C$ are located at the maximum range of the drone, it means that each point in $B$ and $C$ is located at the top-most point of an ellipse corresponding to its unique feasible delivery. Thus, the only way to make a delivery to such a point is for the drone to leave the truck at the left focus of the ellipse and return to the truck at the right focus of the ellipse. Note that the distance between the foci is $4$. 
It is clear that since the  flying range of the drone equals $4T$, no deliveries to points in $A$ can be scheduled after the deliveries to points in $C$ are made. Thus a valid schedule delivering to all the points must schedule deliveries to $A$ in the intervals between deliveries to points in $B$. There are $k$ such intervals, and each interval is of length $2 + \epsilon(n)$. We claim that at most three points $[0, y_{i_1}], [0,y_{i_2}], [0,y_{i_3}]$ can be scheduled within such an interval and if only if $y_{i_1} + y_{i_2} + y_{i_3} \le T$. Establishing this claim would finish the proof of the theorem.

Assume we have three integers $y_{i_1}, y_{i_2}, y_{i_3}$ such that 
 $y_{i_1}+ y_{i_2}+ y_{i_3} \leq T$ and the truck with the drone on it
is at position   $[i(6+\epsilon(n)+2,m]$ for $0\leq i \leq k-1$. By the upper bound on the delivery time  in Lemma~\ref{lem:np_helper} and observing that $i < k = n/3$, the total time for the three consecutive deliveries started at  $[i(6+\epsilon(n)+2),m]$ is
at most 
\begin{align}
\frac{2(y_{i_1} +y_{i_2}+ y_{i_3})}{T} +3 \frac{1+ 4 (k(6+\epsilon(n)))^2 + (k(6+\epsilon(n)))}{T^2-1} \leq 2 + \frac{2n^2(6+\epsilon(n))^2}{T^2-1} = 2 + O(n^2/T^2).
\end{align}
Thus, the deliveries to $[0, y_{i_1}], [0,y_{i_2}], [0,y_{i_3}]$  can be completed before the delivery to   $[(i+1)(6+\epsilon(n)+2,m]$ is scheduled, provided that $O(n^2/T^2) = O(n^{-2c+2}) \le \epsilon(n)$. 

Assume we have three integers $y_{i_1}, y_{i_2}, y_{i_3}$ such that 
 $y_{i_1}+ y_{i_2}+ y_{i_3} > T$ and the truck with the drone on it
is at position   $[i(6+e(n)+2,m]$ for $0\leq i \leq k-1$. By the lower bound on the delivery time  in Lemma~\ref{lem:np_helper}, the total time for the three consecutive deliveries started at  $[i(6+\epsilon(n)+2,m]$ is
at least $2(y_{i_1} +y_{i_2}+ y_{i_3})/T > 2 + 1/T$ and they cannot be completed before the delivery to   $[(i+1)(6+\epsilon(n)+2,m]$ is scheduled, provided that the term $1/T \ge n^{-c-1}$ exceeds $\epsilon(n)$. 

It is left to notice that because we can take $n$ and $c$ sufficiently large, we can find $\epsilon(n)$ satisfying:
\[ O\left(\frac{1}{n^{2c-2}}\right) < \epsilon(n) < \frac{1}{n^{c+1}}.\]
For example, one could take $c = 4$ and $\epsilon(n) = 1/n^6$. This completes the proof of strong NP-hardness. 
\end{proof}

\section{A Greedy Approximation Algorithm}
\label{greedy-alg}
\noindent
In this section we describe a greedy scheduling algorithm for the truck-drone problem.  Our algorithm, which we call $\mathcal{A}_g$ , assigns deliveries to the drone as the truck moves from  left to right starting from the initial position of the truck at $[0, 0]$.  
When the truck with the drone is at position $[s,0]$, our greedy algorithm schedules a delivery to point $d$ which, from among all feasible delivery points, minimizes the round-trip flight time from $[s,0]$, i.e., which gives the {\em earliest possible return} for the drone  to the truck. 
Notice that the delivery point which  minimizes the round-trip flight time from $[s,0]$
is not necessarily the delivery point that is at the shortest distance from $[s,0]$. For example, in Figure \ref{twoappex}, the point $d_1$ is closer than $d_2$ to  $[s,0]$.  Thus one needs to use the function defined by 
Formula \ref{return} to calculate  which delivery point requires the  shortest time to return to the truck.  We then update $s$ to be this shortest return time. If there are no feasible delivery points, then $s$ i set to the earliest time any of the remaining points can be reached after the current time. 

\begin{figure}[ht]
\centering
\includegraphics[scale=0.5]{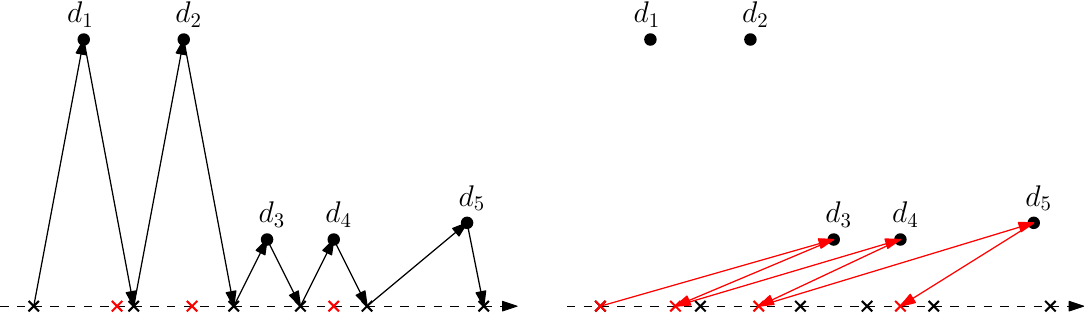}
\caption{ The black arrows,  red arrows show the travel of the drone according to an optimal, greedy schedule, respectively,   for an instance 
${I}=(0,4,8,D)$ with $D$ containing five delivery points $\{d_1,d_2,\ldots,d_5\}$. The black crosses and red crosses on the $x$-axis indicate the return points of $OPT$ and $\mathcal{A}_g$, respectively.}
  \label{fig:opt-greedy}
\end{figure}

Algorithm~\ref{alg:cap} %(see Appendix~\ref{sec:pseudo})
gives the pseudocode for $\mathcal{A}_g$. It is straightforward to see that  Algorithm   $\mathcal{A}_g$ can be implemented in $O(n^2)$ time, since a single evaluation of $\ret$ takes constant time.  Figure \ref{fig:opt-greedy} gives an example of the trajectories of the drone according to an optimal schedule and that of the schedule calculated by  $\mathcal{A}_g$. 

\begin{algorithm}[H]
\caption{Greedy Approximation Algorithm $\mathcal{A}_g$ to Compute Feasible Delivery Schedule}\label{alg:cap}
\begin{algorithmic}[1]
\Require Instance ${I}=(v,R,D)$ where $D=\{d_1,d_2,\ldots,d_n\}$, is a list of delivery points.
\Ensure $\mathcal{S}_I$ is a feasible schedule of deliveries. 
\State $\mathcal{S}_I\gets L \gets \emptyset$
\State $s \gets 0$ \\
\Comment{For each delivery point $d_i$, calculate $es(d_i)$ and $ls(d_i)$ and insert  triple into $L$.}
\For {$i=1 \ldots n$} \If {$s \leq ls(d_i)$}
\State $x.es = es(d_i)$
  \State $x.ls = ls(d_i)$
 \State $x.d = d_i$
\State  Insert($L, x$)
\EndIf
\EndFor
\State Sort($L, key=es$)
\While{$L \neq \emptyset$}
\State $x \gets first(L)$ \\
\If {$s < x.es$}
\State $s \gets x.es$ \Comment{If no feasible delivery point, move $s$ forward.~~~~~~}
\EndIf \\
\Comment{Find feasible delivery point which minimizes the return time to truck.~~~~~}
\State $r_{min}\gets \infty$
\While{ $x \neq NIL$  and $s \geq x.es$}
\State $r \gets \ret(s, v, x.d_i)$
\If {$r < r_{min}$}
\State $r_{min} \gets r$
\State $save \gets x$
\EndIf
\State $x \gets next(L)$
\EndWhile \\
\Comment{Insert next delivery point into schedule, update $s$ and list $L$~~~~~~~~~~~~~~~~~~~}
\State Insert ($\mathcal{S}_I, (save.d, s)$)
\State $s \gets r_{min}$
\For {$x \in L$}
\If {$x.lr < s$}
\State Delete ($L, x$)
\EndIf
\EndFor
\EndWhile
\end{algorithmic}
\end{algorithm}

In the next theorem we compare the size of the schedule calculated by  Algorithm  $\mathcal A_g$ with respect to an optimal algorithm.

\begin{theorem}
\label{th:comp}
Given an instance $I=( v, R, D)$    of the truck-drone delivery problem,  let $\mathcal{S}_g$ be the schedule produced by the algorithm $\mathcal{A}_g$ and let $\mathcal{S}_{OPT}$ be an optimal schedule. Then $$|\mathcal{S}_{OPT}| \leq 2 |\mathcal{S}_g|$$

\end{theorem}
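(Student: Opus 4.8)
The plan is a standard exchange/charging argument adapted to the geometry of this problem. Let $\mathcal{S}_g = ((g_1, \sigma_1), (g_2, \sigma_2), \ldots)$ be the greedy schedule and $\mathcal{S}_{OPT} = ((o_1, \tau_1), (o_2, \tau_2), \ldots, (o_M, \tau_M))$ an optimal schedule, both listed in order of increasing start time. Let the greedy return points be $\rho_0 = 0 \le \rho_1 \le \rho_2 \le \cdots$, where $\rho_j$ is the truck position at which the drone returns after greedy's $j$-th delivery (and $\rho_0$ is the truck's start). The goal is to show $M \le 2|\mathcal{S}_g|$. I would do this by proving that every delivery in $\mathcal{S}_{OPT}$ is ``charged'' to one of the first $|\mathcal{S}_g|$ time windows $[\rho_{j-1}, \rho_j]$ of the greedy run, and that each such window receives at most two charges.

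\textbf{Key steps.} First I would establish the crucial greedy invariant: after greedy has made $j$ deliveries, its current truck position $\rho_j$ is no larger than the return position of $\mathcal{S}_{OPT}$ after \emph{its} $j$-th delivery. Formally, $\rho_j \le \operatorname{ret}(\tau_j, o_j, v)$ for all $j \le \min(|\mathcal{S}_g|, M)$. The base case is $\rho_0 = 0 \le \tau_1 \le \operatorname{ret}(\tau_1,o_1,v)$. For the inductive step, suppose $\rho_{j-1} \le \operatorname{ret}(\tau_{j-1}, o_{j-1}, v) \le \tau_j$. Then at the moment greedy is at position $\rho_{j-1}$, the optimal delivery point $o_j$ is still feasible for greedy: either $\rho_{j-1} \le \tau_j \le ls(o_j)$ so greedy could fly to $o_j$ directly, or greedy waits to $es(o_j)$. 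In the first case, since $\rho_{j-1} \le \tau_j < \operatorname{ret}(\tau_j, o_j, v)$, Lemma~\ref{lem:valid-trajectories} (with $s' = \rho_{j-1}$, $s = \tau_j$, $r = r' = \operatorname{ret}(\tau_j,o_j,v)$, delivering to $d = d' = o_j$) gives that greedy can deliver to $o_j$ from $\rho_{j-1}$ and return \emph{no later} than $\operatorname{ret}(\tau_j, o_j, v)$; in the second case ($\rho_{j-1} < es(o_j)$) greedy returns at exactly $er(o_j) \le \operatorname{ret}(\tau_j,o_j,v)$ by Observation~\ref{obs:focus} (as $\tau_j \ge es(o_j)$). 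In either case, $o_j$ is a candidate with return position $\le \operatorname{ret}(\tau_j, o_j, v)$, and greedy picks the candidate minimizing return position, so $\rho_j \le \operatorname{ret}(\tau_j, o_j, v)$, closing the induction. (One subtlety: I must check that greedy does not terminate prematurely — as long as $j \le M$, the point $o_j$ is an available, feasible delivery, so greedy's list is nonempty and it makes a $j$-th delivery; hence $|\mathcal{S}_g| \ge \min(\text{what is reachable})$, and in particular if $|\mathcal{S}_g| < M$ then the invariant holds for all $j \le |\mathcal{S}_g|$.)

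\textbf{The charging.} Let $L = |\mathcal{S}_g|$. Consider the greedy intervals $W_j = [\rho_{j-1}, \rho_j]$ for $j = 1, \ldots, L$, and note the last greedy return $\rho_L$. I claim at most two optimal deliveries have their start time $\tau_i$ lying strictly before $\rho_L$ falls short of covering them — more precisely, I will show every optimal delivery $o_i$ is charged to some $W_j$ via: $o_i$ is charged to $W_j$ if $\tau_i \in W_j$ (breaking ties by assigning to the later interval if $\tau_i = \rho_j$). First, no optimal delivery can start at position $> \rho_L$: by the invariant, $\rho_L \le \operatorname{ret}(\tau_L, o_L, v)$ if $L \le M$, but actually I need the cleaner statement that greedy \emph{stopped} because no further delivery point was feasible from $\rho_L$, i.e., every not-yet-delivered point $d$ has $ls(d) < \rho_L$ — wait, greedy deletes a point only when $lr(d) < s$; so when greedy terminates, every remaining point $d$ satisfies $lr(d) < \rho_L$, hence $ls(d) < \rho_L$, hence $d$ cannot be started by \emph{anyone} after position $\rho_L$. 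Combined with the invariant $\rho_j \le \operatorname{ret}(\tau_j, o_j, v) = $ (optimal's return after delivery $j$) $\le \tau_{j+1}$, this means the optimal deliveries $o_1, \ldots, o_L$ have $\tau_j \le \rho_j$ interleaving, and $o_{L+1}, \ldots, o_M$ (if any) must have start times $\tau_i \ge \tau_{L+1} \ge \rho_L$; but any point startable at position $\ge \rho_L$ must be one greedy never deleted and was infeasible-or-absent — here I need to argue these contribute at most one extra delivery or fold them into the last window. The cleanest route: show each window $W_j$ contains at most two of the start positions $\tau_1, \ldots, \tau_M$. Since the $\tau_i$ are increasing and $\tau_{j+1} \ge \rho_j$ for $j \le L$ (optimal's $(j{+}1)$-st delivery starts only after its $j$-th return, which by the invariant is $\ge \rho_j$), the interval $(\rho_{j-1}, \rho_j]$ can contain $\tau_i$ only for $i \in \{j-1, j\}$ roughly — i.e.\ at most two consecutive indices. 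Summing, $M \le 2L + O(1)$, and a careful bookkeeping of the endpoints kills the $O(1)$, giving $M \le 2L$.

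\textbf{Main obstacle.} The delicate point is handling the ``tail'': optimal deliveries that start at or after the greedy's final return position $\rho_L$, and making sure the interleaving inequality $\tau_{j+1} \ge \rho_j$ is used exactly tightly enough to get the clean factor $2$ rather than $2 + c$. I expect the right framing is: pair up optimal deliveries $o_{2j-1}, o_{2j}$ and charge both to greedy's $j$-th delivery, using the invariant to show that after greedy's $j$-th delivery it has ``caught up'' to where optimal is after its $2j$-th delivery — i.e., strengthen the invariant to $\rho_j \le \operatorname{ret}(\tau_{2j}, o_{2j}, v)$ — which would immediately yield that greedy does not run out before optimal's deliveries are exhausted in pairs, hence $L \ge \lfloor M/2 \rfloor$, i.e., $M \le 2L + 1$; and then an additional argument (e.g. that if $M$ is odd the inequality $M \le 2L$ still holds because greedy gets one ``free'' first step, or by re-examining the base case) removes the $+1$. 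Proving this stronger two-steps-per-one invariant by induction is where the real work lies; everything else is the geometry already packaged in Lemma~\ref{lem:valid-trajectories} and Observation~\ref{obs:focus}.
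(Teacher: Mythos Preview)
Your central ``stays ahead'' invariant $\rho_j \le \ret(\tau_j, o_j, v)$ is false, and this undermines the whole argument. If it held for all $j \le M$, then at every step greedy would have an available feasible point (namely $o_{j+1}$, or whatever optimal does next), and greedy would therefore make at least $M$ deliveries --- i.e., greedy would be \emph{optimal}. But the tight example in the paper (Figure~\ref{twoappex}) shows greedy achieving only half of optimal, so the invariant cannot hold. The specific hole in your inductive step is the sentence ``the optimal delivery point $o_j$ is still feasible for greedy'': you check the timing constraint $\rho_{j-1} \le \tau_j \le ls(o_j)$ but never check that greedy has not \emph{already delivered} to $o_j$ in some earlier step. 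In the tight example this is exactly what happens: greedy burns $d_2$ as its first delivery, and then at step~2 your argument would want to offer $o_2 = d_2$ as a candidate, but it is gone. The charging paragraph that follows inherits the same defect, since the claim ``$(\rho_{j-1},\rho_j]$ contains $\tau_i$ only for $i\in\{j-1,j\}$'' is derived from $\tau_{j+1} \ge \rho_j$, which is just the invariant again.

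You correctly sense in your final paragraph that the right target is a two-for-one statement, but you do not prove it, and the same availability issue blocks the obvious induction there too (greedy's first $j-1$ deliveries may include both $o_{2j-1}$ and $o_{2j}$). The paper sidesteps all of this by abandoning the stays-ahead paradigm entirely: it builds an explicit map $\mathcal{F}$ from optimal deliveries to greedy deliveries and shows it is at most 2-to-1. For each optimal delivery $d_{i_k}$ with flight interval $(s_k, r_k]$, it looks at which greedy returns land in $(s_k,r_k]$ (set $Q_k$) or which greedy flight straddles $r_k$ with start before $s_k$ (set $P_k$); if both are empty, a local argument using Lemma~\ref{lem:valid-trajectories} forces $d_{i_k}$ itself to appear in the greedy schedule. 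The 2-to-1 bound then follows because Clauses~\ref{func1}--\ref{func2} are injective (disjoint half-open intervals) and Clause~\ref{func3} is trivially injective, so any collision involves one preimage from each side.
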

\begin{proof}

Let $D=\{d_1,d_2,\ldots,d_n\}$ and let
 $$\mathcal{S}_{OPT}=((d_{i_1},s_1),(d_{i_2},s_2),\ldots, (d_{i_p},s_p)), \text{ and }$$ 
  $$\mathcal{S}_{g}=((d_{j_1},s'_1),(d_{j_2},s'_2),\ldots, (d_{j_q},s'_q)), \text{where } q \le p \le n.$$ 

  We give a function $\mathcal{F}$ that maps delivery points in $\mathcal{S}_{OPT}$ to points in $\mathcal{S}_g$. For every $k$, with $1 \leq k \leq p$, define 
  $r_k$ to be the return time of the drone for the $k^{th}$ delivery in $S_{OPT}$ and similarly 
  for every $k$, with $1 \leq k \leq q$, define 
  $r'_k$ to be the return time of the drone for the $k^{th}$ delivery in   $\mathcal{S}_g$. Define $Q_k$ to be the set of delivery points in $\mathcal{S}_g$ whose return to the truck in  the greedy schedule occurs  {\em during}  the flight time of the drone to deliver the $k^{th}$ item in $\mathcal{S}_{OPT}$. That is, $$Q_k = \{ d_{j_\ell} \:: r'_\ell \in (s_k, r_k] \}$$

  If $Q_k \neq \emptyset$, define $last(Q_k)$ to be the element of $Q_k$ with the latest return according to the greedy schedule. 

  Now define $P_k$ to be the set of delivery points in $\mathcal{S}_g$ whose start time in the greedy schedule is before the start time of the $k^{th}$ delivery in the optimal schedule, but whose return to the truck in the greedy schedule occurs between the return from the $k^{th}$  delivery in the optimal schedule  and the start of the $(k+1)^{st}$ delivery. 
  That is:
  
  $$P_k = \{ d_{j_\ell} \::  s'_\ell \leq s_k \mbox{ and } r_k < r'_\ell \leq s_{k+1} \}$$

  If $P_k \neq \emptyset$, note that it can have only one element, denote it as $p_k$. 

  We are now ready to define the function $\mathcal{F}$. For all $k \in \{1, \ldots, p \}$

\begin{subnumcases}{\mathcal{F}(d_{i_k})=}
       last(Q_k)    & \text{if } $Q_k \neq \emptyset $ \label{func1} \\
        p_k & \text{if } $Q_k = \emptyset$ and $P_k \neq \emptyset$ \label{func2} \\
       d_{i_k} & otherwise    \label{func3}         
      \end{subnumcases}

We give an example to illustrate function $\mathcal{F}$ using 
an  instance shown in Figure \ref{fig:opt-greedy}. In that case  the optimal schedule makes 5 deliveries in order to 
 $(d_1,d_2,d_3,d_4,d_5)$ and greedy schedule  contains
3 deliveries $(d_3,d_4, d_5)$ listed in order, omitting the starting times.  For this case the function $\mathcal{F}$ is as follows:

$\;\;\mathcal{F}(d_1)=d_3$, $\mathcal{F}(d_2)=d_4$,
$\mathcal{F}(d_3)=d_3$, $\mathcal{F}(d_5)=d_5$, and  $\mathcal{F}(d_5)=d_5$. %\textcolor{red}{DENIS: I think for the modified figure we should have $\mathcal{F}{d_4} = d_5$}\\

First we prove that  Clauses \ref{func1}, \ref{func2}, and \ref{func3} 
 define  a valid function on $L'$, that is, every delivery point in the optimal schedule is mapped to a delivery point in the greedy schedule. 
Since $Q_k$ and $P_k$ only contain delivery points in the greedy schedule, the only case to consider is that $Q_k=P_k = \emptyset$ and $\mathcal{F}(d_{i_k
}) = d_{i_k}$ and $d_{i_k}$ is not part of the schedule $\mathcal{S}_g$ of the greedy algorithm $\mathcal{A}_g$. 

% By the definition of $\mathcal{F}$ the clauses  
% \ref{func1},\ref{func2}, and \ref{func3} are disjoint. 
% The only situation when an element $(d_{i_j},s_j,r_j) \in L$  might not be assigned a value is the case when for  
%   $(d_{i_j},s_j,r_j)$,  neither clause \ref{func1} nor \ref{func2}  assigns an element to  $\mathcal{F}((d_{i_j},s_j,r_j))$ and, furthermore,  demand point $d_{i_j}$ has not been scheduled for a delivery by  $\mathcal{A}_g$.
% We consider below this case.

%Assume that neither clause \ref{func1} nor \ref{func2}  assigns an element    of $L'$ to  $((d_{i_j},s_j,r_j))$  and $d_{i_j}$ has not been 
%scheduled for a delivery by  $\mathcal{A}_g$. 

Let $\ell$ be the largest integer such that $s'_\ell \leq s_k$. By assumption $d_{j_\ell} \neq d_{i_k}$. Since
$Q_k = P_k = \emptyset$, either $r'_\ell > s_{k+1}$ or $r'_\ell \leq s_k$.
If $r'_\ell \leq s_k$ (see Figure~\ref{fig:caseab}(a)), consider  the $(\ell+1)^{st}$ delivery by the greedy algorithm. We know that $s'_{\ell+1} > s_k$ and since  $Q_k = \emptyset$, it must be that $r'_{\ell+1} > r_k $.  Thus for its $(\ell+1)^{st}$ delivery, the greedy heuristic should have chosen to deliver to $d_{i_k}$ rather than to $d_{j_{\ell+1}}$, a contradiction. 

Therefore it must be that $r'_\ell > s_{k+1}$. But then, using Lemma~\ref{lem:valid-trajectories}, there is a valid trajectory for the drone flying to $d_{i_k}$ starting at  $s'_\ell$  with an  {\em earlier} return time that is at most $ r_k \leq s_{k+1} < r'_\ell$ (see Figure~\ref{fig:caseab}(b)). Thus for its $\ell^{th}$ delivery, the greedy heuristic should have chosen to deliver to $d_{i_k}$ rather than to $d_{j_\ell}$, a contradiction. Thus $d_{i_k}$ must be part of the greedy schedule, and  $\mathcal{F}$ is a valid function mapping the delivery points in $\mathcal{S}_{OPT}$ to the delivery points in $\mathcal{S}_g$.

% ,s'_k,r'_k)$ point $d_{i_k'} \neq d_{i_j}$,  either\\
% (a) $r'_k \leq s_j$,  $r'_{k+1} > s_{j+1}$, or\\  
% (b) $r'_k > s_{j+1}$ 
\begin{figure}[ht]
 \centering
\includegraphics[scale=0.5]{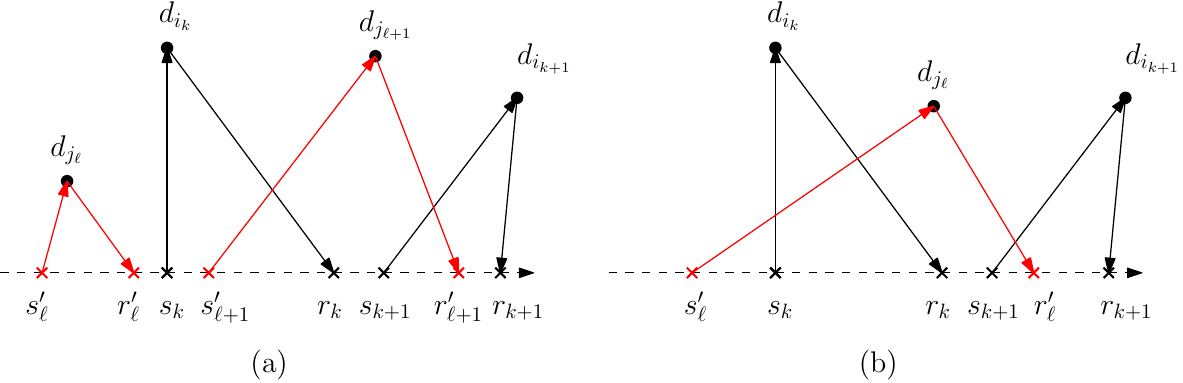}
\caption{This figure illustrates the two cases in the proof that the function $\mathcal{F}$ in Theorem~\ref{th:comp} is well defined. Red lines show deliveries in a presumed greedy schedule, and black lines indicate deliveries in a presumed optimal schedule.The dashed line represents the $x$ axis, with distances from the origin marked.}%The red lines show the earliest delivery to $q$, the blue lines show the latest delivery to $q$. The earliest  delivery to $q$ can start  at $f_{11}$ and $f_{21}$ 
%be scheduled. \textcolor{red}{The figure needs to be changed to match the proof.}}
  \label{fig:caseab}
  \end{figure}
 % In case (a), see Figure \ref{fig:caseab}(a), the drone can make a delivery to $d_{i_j}$ by starting 
% at $r'_k$ and returning before or at time $r_j$, which is shorter than the delivery $(d_{i_{k+1}'},s'_{k+1},r'_{k+1})$, which implies that  $\mathcal{A}_g$ 
% did not use the greedy selection at that point, a contradiction.
% In case (b), see Figure \ref{fig:caseab}(b), the drone can make a delivery to $d_{1_j}$ by starting 
% at $s'_k$ and returning before or at time $r_j$, which is shorter than the delivery $(d_{i_k'},s'_{k},r'_{k})$, which shows that  $\mathcal{A}_g$ 
% did not use the greedy selection at that point, a contradiction. 

% Thus we can conclude that the assumption ``  $d_{i_j}$ has not been 
% scheduled for a delivery by the greedy algorithm, when neither clause \ref{func1} nor \ref{func2}  assigns an element to  $\mathcal{F}((d_{i_j},s_j,r_j))$'' is incorrect, and the function is correctly defined on $L$. 

Finally, we claim that $\mathcal{F}$ maps at most two delivery points in $\mathcal{S}_{OPT}$ to one delivery point in $\mathcal{S}_g$. First, since the half-closed intervals $(s'_1,r'_1],(s'_2,r'_2],\ldots,(s'_{k'},r'_{k'}]$
are all disjoint,  and  the half-closed intervals   $(s_1,r_1],(s_2,r_2],\ldots,(s_{k'},r_k]$ are also all disjoint, 
and  any return point $[r',0]$  can satisfy at most one of  Clauses \ref{func1} and \ref{func2},
it follows that distinct  elements in $\mathcal{S}_{OPT}$ are mapped to distinct elements  of $\mathcal{S}_g$ by those two clauses. Second, clearly distinct  elements in $\mathcal{S}_{OPT}$ are mapped to distinct elements  of $\mathcal{S}_g$ by Clause~\ref{func3}. Therefore, the only kind of "collision" that can occur is that $\mathcal{F}(d_{i_k})$ is mapped to $d_{j_\ell}$ by Clause ~\ref{func1} or Clause \ref{func2} and $\mathcal{F}(d_{i_q})$ is mapped to $d_{j_\ell}$ by Clause~\ref{func3}. This proves our claim that $\mathcal{F}$ maps at most two delivery points in $\mathcal{S}_{OPT}$ to one delivery point in $\mathcal{S}_g$. 

% Clause \ref{func3} may assign to $d_{i_j}$ an element of $L'$ that was also  assigned to some  $(d_{i_m},s_m,r_m)$, $m\neq j$ by   \ref{func1} or \ref{func2} but, clearly, Clause  \ref{func3} assigns different elements of $L'$ to different elements of $L$. 

% Thus function $\mathcal{F}$ assigns to each entry in $L$ an  entry of $L'$,  and in total, no entry in $L'$ is used  more than twice by the function. 
%However, such a function would not exist if $p <\lceil m/2\rceil$ and, t
We conclude that  the schedule $\mathcal{S}_g$ created by  
 $\mathcal{A}_g$  contains  at least  $\lceil p/2\rceil$ elements, as desired. 
 That is, $\mathcal{A}_g$ is a 2-approximation algorithm.
\end{proof}

The approximation ratio of 2 is tight. To see this, consider the instance given in Figure \ref{twoappex}. For this  instance  the schedule computed by the greedy algorithm contains exactly one half of the delivery points, while an optimal schedule makes  deliveries to all  points. Thus, the approximation factor of $2$ in Theorem \ref{th:comp} cannot be improved.   
\begin{figure}[ht]
\centering
\includegraphics[width=0.95\textwidth]{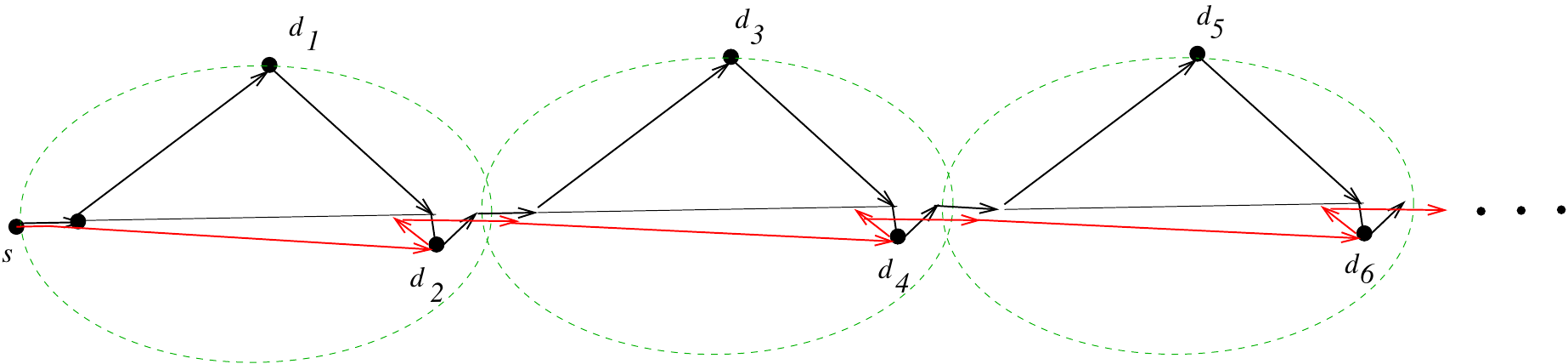}
\caption{Approximation factor of $2$ is sharp: in this instance, $d_1,d_3,d_5,\ldots$ are located at the maximal reach of the drone, and reachable only from the left focus  of the corresponding ellipse (dashed green). An optimal schedule, shown in black, contains  all points in order $d_1,d_2,d_3,d_4,\ldots$. 
The greedy algorithm,  can immediately schedule a delivery to $d_2$, but not to $d_1$. After scheduling a delivery to $d_2$ a delivery
to $d_1$ is not feasible any more, and this scheduling, shown in red, is repeated, resulting in the  schedule   $d_2,d_4,d_6,\ldots$.} 
  \label{twoappex}
\end{figure}

 \section{Optimal algorithm for a restricted set of inputs}
\label{sec:optimal}

As seen in the proof of strong NP-hardness in Section~\ref{sec:np}, having many delivery points
with the same $x$-coordinate creates a decision problem: should a delivery to a  point $[0,y]$ be scheduled  prior to or after the truck reaches $[0,0]$. These decisions make the truck-drone problem  NP-hard. In this section, we specify a family of instances called \IN instances in which the delivery points do not  have the same  or ``nearly'' the same $x$-coordinates, where ``nearly'' depends on the difference in their $y$-coordinates.  In particular, the greater the difference in the $y$-coordinates of the points, the greater is the difference in their $x$-coordinates in \IN instances. We show that there is $O(n^3)$ algorithm to compute an optimal schedule for \IN instances.

\begin{definition}\label{def:schedule}
 Let  $I=(v, R, D)$  be an instance of the truck-drone 
delivery problem where $D=\{d_1,d_2,\ldots,d_n\}$.  We say $I$ is a \IN instance if:
 %We say that points $D=\{d_1,d_2,\ldots,d_n\}$ are in {\em \IN} position for   speed $v$ and range  $R$ if:\\
\begin{itemize}
    \item for every $i, j \in \{1, \ldots, n \}$, with $i \neq j$, the delivery point $d_j$ is not contained in the triangle $[es(d_i),0],d_i,[lr(d_i),0]$, and 
\item the set of closed intervals 
$\{[es(d_1), ls(d_1)], [es(d_2), ls(d_2)],
\ldots, [es(d_n), ls(d_n)]\}$
form a proper interval graph \cite{golumbic}, i.e., no interval in the set is a subset of another interval in the set.  
\end{itemize}
\end{definition}

Figure \ref{fig:prime} shows an example of a   \IN  instance. The definition of a \IN instance implies that the delivery points  have pairwise 
different $x$-coordinates and clearly, not many of them can reside in a narrow vertical band. 

The  lemma below implies that for a \IN instance with  $D=\{d_1,d_2,\ldots,d_n\}$, 
the intervals  $[es(d_1), ls(d_1)], [es(d_2), ls(d_2)],
\ldots, [es(d_n), ls(d_n)]$ are ordered by the $x$-coordinates of the corresponding points in $D$.
%\begin{lemmaE}[][end,restate]
\begin{lemma}
\label{lemma:order}
Let $I=( v, R, D)$   be a \IN  instance of the truck-drone delivery problem with $D=\{d_1,d_2,\ldots,d_n\}$. Let  $d_i=[x_i,y_i]$ and $d_j=[x_j,y_j]$ be two points of $D$  with $x_i<x_j$. 
Then either $ ls(d_i)< es(d_j)$, or 
$es(d_i)<es(d_j)\leq ls(d_i)<ls(d_j)$
%\end{lemmaE}
\end{lemma}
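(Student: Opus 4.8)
The plan is to recall the formulas for $es$ and $ls$ from Observation~\ref{obs:5} and translate the two claimed cases into statements purely about $x$-coordinates and the quantity $x' = M\sqrt{1 - y^2/m^2}$. Writing $x'_i$ and $x'_j$ for the corresponding values at $d_i$ and $d_j$, Observation~\ref{obs:5} gives $es(d_i) = x_i - R/(2v) - x'_i$, $ls(d_i) = x_i - R/(2v) + x'_i$, and similarly for $j$; so the length of each interval $[es(d),ls(d)]$ is $2x'$, and its midpoint is $x - R/(2v)$. The hypothesis $x_i < x_j$ says the interval for $d_i$ has a strictly smaller midpoint than that for $d_j$.

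Next I would split on whether the two intervals are disjoint. If $ls(d_i) < es(d_j)$, we are in the first case and there is nothing more to prove. Otherwise $ls(d_i) \ge es(d_j)$, i.e. the intervals overlap (or touch), and I must establish the chain $es(d_i) < es(d_j) \le ls(d_i) < ls(d_j)$. The middle inequality $es(d_j) \le ls(d_i)$ is exactly the case assumption. For the outer two inequalities, I would invoke the second bullet of Definition~\ref{def:schedule}: the intervals form a \IN interval graph, so neither interval is contained in the other. Since the intervals overlap and neither contains the other, their left endpoints must be strictly ordered the same way as their right endpoints; combined with the fact that $d_i$'s interval has the smaller midpoint, the only consistent ordering is $es(d_i) < es(d_j)$ and $ls(d_i) < ls(d_j)$. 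I would spell this out: if instead $es(d_j) \le es(d_i)$, then since $es(d_j) \le ls(d_i)$ and (by the midpoint comparison and equal-length-free structure) one can derive $ls(d_j) \ge ls(d_i)$, giving containment of $d_i$'s interval in $d_j$'s, contradicting properness; symmetrically for the right endpoints. Strictness of the inequalities comes from the fact that properness rules out equality of an endpoint pair that would force containment, together with distinctness of $x$-coordinates (which the \IN condition guarantees, as noted after Definition~\ref{def:schedule}).

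The main obstacle I anticipate is handling the edge cases cleanly: ensuring the inequalities are \emph{strict} where the statement claims them to be, and correctly using the precise meaning of ``\IN interval graph'' (no interval a subset of another, where ``subset'' presumably means the closed-interval containment). A subtle point is that the two intervals could in principle have the same midpoint only if $x_i = x_j$, which is excluded, so the midpoint comparison is genuinely strict; from there, two overlapping intervals with distinct midpoints and neither containing the other are forced into the ``staircase'' configuration $es(d_i) < es(d_j) \le ls(d_i) < ls(d_j)$. I would present this as a short case analysis on the relative order of $es(d_i)$ and $es(d_j)$, using the properness condition to eliminate the configuration that yields containment, and I expect the whole argument to take well under a page once the endpoint formulas are substituted in.
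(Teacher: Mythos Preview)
Your proposal is correct and lands on the same conclusion as the paper, but via a slightly different decomposition. The paper case-splits on the relative sizes of $|y_i|$ and $|y_j|$: from Observation~\ref{obs:5}, if $|y_j|<|y_i|$ then $x'_j>x'_i$, and together with $x_j>x_i$ this gives $ls(d_j)>ls(d_i)$ directly, after which non-containment forces $es(d_i)<es(d_j)$; the case $|y_j|>|y_i|$ is symmetric and yields $es(d_j)>es(d_i)$ directly. You instead observe that the midpoint of $[es(d),ls(d)]$ equals $x-R/(2v)$, so $x_i<x_j$ strictly orders the midpoints, and then argue that two overlapping closed intervals with strictly ordered midpoints, neither containing the other, must sit in the staircase configuration $es(d_i)<es(d_j)\le ls(d_i)<ls(d_j)$. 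Your route is a touch more uniform (no $y$-split, and it automatically covers the $|y_i|=|y_j|$ case that the paper's proof leaves implicit); the paper's route is marginally more concrete in each branch. One small cleanup: you do not need any ``equal-length-free structure''. From $es(d_j)\le es(d_i)$ and the strict midpoint inequality $es(d_j)+ls(d_j)>es(d_i)+ls(d_i)$ you already get $ls(d_j)>ls(d_i)$ strictly, hence proper containment of $[es(d_i),ls(d_i)]$ in $[es(d_j),ls(d_j)]$ and the desired contradiction; the symmetric argument gives strict $ls(d_i)<ls(d_j)$.
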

%\begin{proofE}
\begin{proof}

If $y_j < y_i$ then $ls(d_j) >ls(d_i)$. Since interval $[es(d_j), ls(d_j)]$
cannot contain $[es(d_i), ls(d_i)]$, either  $ ls(d_i)< es(d_j)$, or 
$es(d_i)<es(d_j)\leq ls(d_i)<ls(d_j)$.

 If $y_j > y_i$ then $es(d_j) >es(d_i)$. Since interval $[es(d_i), ls(d_i)]$
cannot contain $[es(d_j), ls(d_j)]$, either  $ ls(d_i)< es(d_j)$, or 
$es(d_i)<es(d_j)\leq ls(d_i)<ls(d_j)$.
%\end{proofE}
\end{proof}
Given an instance $I=(v, R, D)$, we can verify if $I$ is a \IN instance  in $O(n^2)$ time by checking each pair of intervals for non-containment,  and each triangle for the non-inclusion of other points of $D$.

\begin{figure}[ht]
\centering
\includegraphics[width=0.9\textwidth]{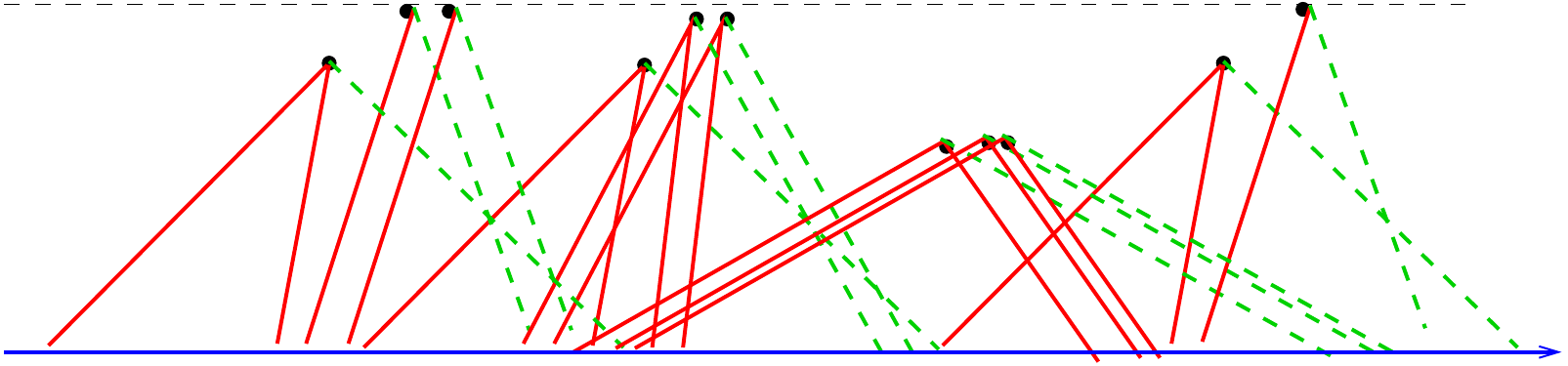}
\caption{An example of a \IN instance $I=(v, R, D)$ position for $v=3$, and $r=12$. For each delivery point the red segment points to the corresponding $ed$ and $ld$ points on the line, and the green segment points to the corresponding $la$. The three topmost points are at the limit of the reach of the drone. }
  \label{fig:prime}
\end{figure}

The following lemma is used to show that for {\IN} instances we can restrict our attention to schedules in which the  subsequent deliveries 
are ordered by the $x$ -coordinates of delivery points, and in which the trajectories of the drone are non-crossing.   See Figure~\ref{fig:traj} for an illustration. 

\begin{figure}[ht]
\centering
\includegraphics[scale=0.7]{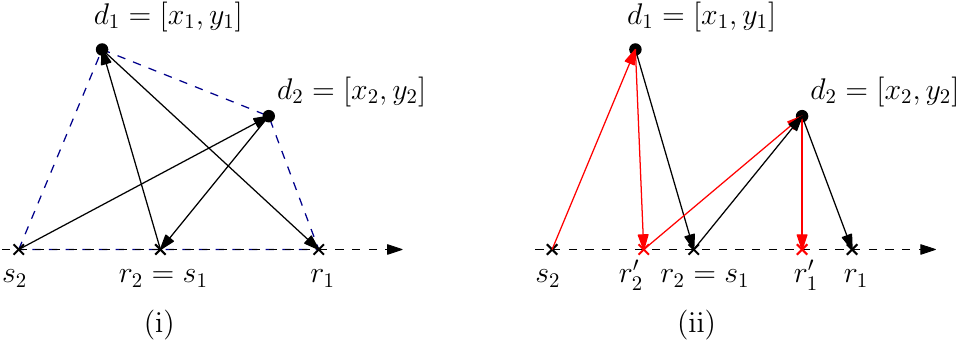}
\caption{(i) The crossing trajectories to $d_1$ and $d_2$ are shown in black.
(ii) The non-crossing trajectories, shorter in total, are in red.}
  \label{fig:traj}
\end{figure}

%\begin{lemmaE}[Non-crossing Trajectories][end,restate] 
\begin{lemma}
\label{lemma:noncros}
Let $I=(v, R, D)$   be a \IN instance of the truck-drone problem. Assume that there is a feasible schedule for this instance 
in which a delivery to, say  $d_2=[x_2,y_2]$ immediately precedes that to  $d_1=[x_1,y_1]$, with $x_1<x_2$. Then
\begin{enumerate}
\item The trajectories of the drone to $d_1$ and $d_2$ must cross.
\item By swapping the order of deliveries to $d_1$ and $d_2$ the total time of the two deliveries cannot increase, and thus swapping the two deliveries
maintains the feasibility of the schedule, i.e., crossings of  two consecutive trajectories can be avoided.
\end{enumerate}
%\end{lemmaE}
\end{lemma}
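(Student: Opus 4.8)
The plan is to prove both parts by reasoning about the geometry of the two drone trajectories, relying on Lemma~\ref{lemma:order} to control the relative positions of the relevant intervals and on the triangle-avoidance condition in Definition~\ref{def:schedule}. Write the given schedule segment as a delivery to $d_2$ starting at $[s_2,0]$ and returning at $[r_2,0]$, immediately followed by a delivery to $d_1$ starting at $[s_1,0]$ and returning at $[r_1,0]$, so that $s_2 \le r_2 \le s_1 \le r_1$ and $x_1 < x_2$. First I would handle part~(1): I want to show the polyline $[s_2,0] \to d_2 \to [r_2,0]$ and the polyline $[s_1,0] \to d_1 \to [r_1,0]$ must intersect. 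The key observation is that $d_1$ and $d_2$ lie in the same narrow horizontal band (both have $|y| \le m$), and the $x$-coordinates of the departure/return points are ordered so that the entire first trajectory lies weakly to the left of $s_1$ while $d_1$ itself, being the destination of the second trip and having a \emph{smaller} $x$-coordinate than $d_2$, sits ``inside'' the horizontal extent already swept. Concretely, since $x_1 < x_2$, Lemma~\ref{lemma:order} tells us the feasibility intervals of $d_1$ and $d_2$ are ordered with $es(d_1) < es(d_2)$ (or the intervals are disjoint with $d_1$'s entirely to the left, which would contradict $s_2 \le s_1$); I would use this to pin down that $s_1 \ge es(d_1)$ forces the second trajectory's outbound leg to start to the right of where $d_1$ can be reached earliest, so the outbound leg from $[s_1,0]$ to $d_1$ heads up-and-to-the-left, while the first trajectory's return leg from $d_2$ to $[r_2,0]$ heads down-and-to-the-right; a planarity/Jordan-curve argument (two $x$-monotone-in-opposite-senses arcs joining points on the $x$-axis whose endpoints interleave) then yields a crossing. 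I expect this to be the main obstacle: making the interleaving of endpoints precise and ruling out degenerate configurations where the ``crossing'' is only a touching at an endpoint, which is exactly where the triangle-avoidance hypothesis ($d_j$ not in the triangle $[es(d_i),0],d_i,[lr(d_i),0]$) earns its keep.

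For part~(2), I would argue by an exchange/uncrossing argument. Consider the four points $[s_2,0]$, $d_1$, $d_2$, $[r_1,0]$ and the two candidate ways to route: the original crossing pair of trips (first to $d_2$ from $[s_2,0]$, then to $d_1$ ending at $[r_1,0]$) versus the swapped uncrossed pair (first to $d_1$ from $[s_2,0]$, then to $d_2$ ending at $[r_1,0]$). The total flight distance of the swapped pair is $|[s_2,0]\,d_1| + |d_1\,[r_2',0]| + |[s_1',0]\,d_2| + |d_2\,[r_1,0]|$ for the appropriate intermediate return/start point; the standard fact is that replacing two crossing segments by the two ``parallel'' segments on the same four endpoints strictly decreases total length (triangle inequality applied at the crossing point). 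I would phrase this cleanly as: the sum of the lengths of the two polylines after uncrossing is at most the sum before, hence since both deliveries in the original schedule used total time $(r_1 - s_2)$ and the drone's speed is fixed, the swapped deliveries fit within the same window $[s_2, r_1]$, possibly finishing earlier; I would also note we may as well keep the start of the first swapped delivery at $[s_2,0]$ (starting earlier never hurts, by Observation~\ref{obs:2}/Lemma~\ref{lem:valid-trajectories}), and the intermediate truck position where the first swapped trip returns is $\le$ the position where the second swapped trip must start, which is guaranteed precisely because total flight time did not increase. Feasibility of any subsequent deliveries is preserved since the window's right endpoint $r_1$ did not move right. The cleanest write-up treats the uncrossing inequality as the one-line geometric lemma (crossing quadrilateral diagonals exceed a pair of opposite sides) and spends its remaining effort translating ``shorter total flight distance'' into ``feasible schedule,'' using that time and distance are proportional for the drone and that the truck positions only need the aggregate round-trip time bound.

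One subtlety I would flag explicitly: after swapping, the drone on its first (now $d_1$) trip might want to start strictly later than $[s_2,0]$ if $s_2 < es(d_1)$, in which case I invoke Lemma~\ref{lem:return_time_from_starting_pos}(2) — starting before $es(d_1)$ just means the drone idles on the truck until $es(d_1)$ and the return is $er(d_1)$ — and I then compare $er(d_1)$ against the truck position from which the $d_2$ trip must launch; monotonicity of round-trip time (Observation~\ref{obs:profile}) together with the original schedule's feasibility closes this. So the skeleton is: (i) set up the four endpoints with their order along the $x$-axis from $x_1 < x_2$ and Lemma~\ref{lemma:order}; (ii) prove the crossing by a Jordan-curve argument, using triangle-avoidance to exclude tangential/degenerate cases; (iii) state the ``uncrossing shortens total length'' geometric fact; (iv) convert the length saving into a time saving and check that all later deliveries remain feasible because the final return point does not move rightward.
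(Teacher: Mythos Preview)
Your overall strategy for part~(2)—the quadrilateral/uncrossing inequality—is the same as the paper's, but there is a real gap in how you convert ``shorter total path length'' into ``feasible swapped schedule.'' The uncrossing inequality tells you that the broken path $[s_2,0]\to d_1\to [r_2,0]\to d_2\to [r_1,0]$ is shorter than the original $[s_2,0]\to d_2\to [r_2,0]\to d_1\to [r_1,0]$. But this broken path is \emph{not} the swapped drone trajectory: after delivering to $d_1$ from $[s_2,0]$ the drone meets the moving truck at some $[r_2',0]$, which in general differs from $[r_2,0]$, and then the second trip starts from $[r_2',0]$. You assert that ``total flight time did not increase'' guarantees the intermediate rendezvous works out, but the inequality you proved says nothing about the path through $r_2'$. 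The paper handles this with a genuine case split: if the $d_1$ trip from $s_2$ is no longer than the $d_2$ trip was (so $r_2'\le r_2$), one argues that routing through $r_2'$ only makes the path shorter still; in the opposite case ($r_2'>r_2$) the paper uses a time-reversal symmetry (reverse the direction of truck and drone) to reduce to the first case. Your write-up needs one of these mechanisms—without it the argument does not close.

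Two smaller points. First, your flagged subtlety about $s_2<es(d_1)$ is a non-issue: from $x_1<x_2$ and Lemma~\ref{lemma:order} one gets $es(d_1)<es(d_2)\le s_2$, so the drone can always launch for $d_1$ at $s_2$; what you actually need from Lemma~\ref{lemma:order} is the chain $es(d_1)<es(d_2)\le s_2<r_2\le ls(d_1)<ls(d_2)$, which also certifies that the swapped second trip (to $d_2$) can start at or after $r_2$. Second, for part~(1) the paper's argument is much more direct than a Jordan-curve/interleaving approach: assume the trajectories do \emph{not} cross and compare $y_1$ with $y_2$; in each case one of $d_1,d_2$ is forced to lie inside the triangle $[es(d_i),0],d_i,[lr(d_i),0]$ of the other, immediately contradicting the triangle-avoidance clause of Definition~\ref{def:schedule}. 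Your plan treats that clause as only ruling out degenerate tangencies, but in fact it is the whole engine of the proof of~(1).
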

%\begin{proofE}
\begin{proof}

\begin{figure}[ht]
\centering
\includegraphics[scale=0.7]{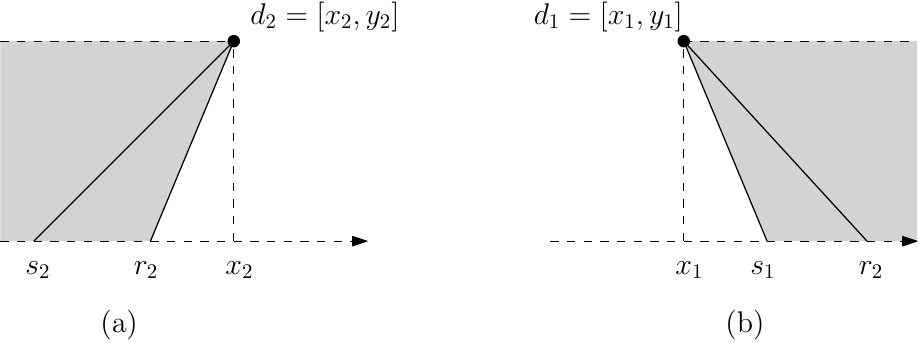}
\caption{Illustration for the proof of (1) in Lemma~\ref{lemma:noncros}. $d_1=[x_1,y_1]$ and $d_2 = [x_2,y_2]$ with $x_1 < x_2$ and delivery to $d_2$ occurring before the delivery to $d_1$. Figure (a) illustrates the case of $y_1 \le y_2$ and the shaded region demonstrates locations of $d_1$ which result in a crossing trajectory. Figure (b) illustrates the case of $y_1 \ge y_2$ and the shaded region demonstrates locations of $d_2$ which result in a crossing trajectory.}
\label{fig:noncros-case-1}
\end{figure}

To see (1), let $s_i, r_i$ denote the start and return times to delivery point $d_i$ for $i \in \{1,2\}$. Assume for contradiction that delivery trajectories do not cross. If $y_1 \le y_2$, then $d_1$ lies inside the triangle $[r_2,0], d_2, [x_2,0]$. This triangle is clearly contained in $[es(d_2),0], d_2, [lr(d_2),0]$ contradicting $D$ being proper. If $y_1 \ge y_2$ then $d_2$ lies inside the triangle $[x_1,0], d_1, [s_1,0]$, which is contained inside $[es(d_1), 0], d_1, [lr(d_1),0]$. This also contradicts $D$ being proper. See Figure~\ref{fig:noncros-case-1}.

%To see (2), assume that the delivery to $d_2$ starts at point $s_2$ and terminates at point 
%$r_2$, the delivery to $d_1$ starts at point $s_1$ and terminates at point 
%$r_1$. %Since  the points are in \IN position, $es(p_1)\leq es(p_2)$ and 
%thus a delivery to $p_1$  is possible to start anywhere between 
%$s_2$ and $s_1$. 
Next, we show (2). By Observation \ref{obs:2} we can assume that the delivery to   $d_1$ starts immediately at time $r_2$, i.e., $s_1=r_2$ and terminates at time  $r_1$.
Clearly, in this case $es(d_2) <ls(d_1)$ and thus, by Lemma \ref{lemma:order},
$es(d_1)<es(d_2)\leq s_2 <r_2\leq  ls(d_1)<ls(d_2)$. Thus, a delivery to $d_1$ can be started at time  $s_2$, and a delivery to $d_2$ can be started at time  $r_2$ or later. 
%Also, point $d_2$ is not inside the triangle $s_2,d_1,s_1$. 
It remains to show that the reversal in the delivery order 
can terminate latest at time $r_1$.

Suppose first that delivery to $d_1$, when started at time  $s_2$ takes at most as much  time as a delivery to $d_2$ at time $s_2$, see Figure \ref{fig:traj} (ii). In the paragraph below, we use $s_i$ to denote the point $[s_i, 0]$ and similarly $r_i$ to denote the point $[r_i, 0]$.
Consider the quadrilateral  $s_2$,$d_1$,$d_2$,$r_1$ shown in blue. Since our instance is a \IN instance, the triangle  $s_2$,$d_1$,$r_1$
doesn't contain $d_2$ and thus this quadrilateral is convex.  
 By the triangular inequality the sum  $|s_2,d_1| +|d_2,r_1|$  of the lengths of two opposite sides  of the quadrilateral  is strictly less than the 
sum of the length of its diagonals $|d_1,r_1|+|s_2,d_2|$.  
Therefore,
$$|s_2,d_1| +|d_1,r_2| + |r_2,d_2|+ |d_2,r_1| 
<|s_2,d_2| +|d_2,r_2|+|r_2,d_1|+|d_1,r_1|$$
and the  path $s_2,d_1,r_2,d_2,r_1$  is shorter than the trajectory $s_2,d_2,r_2,d_1,r_1$.
However, the path $s_2,d_1,r_2,d_2,r_1$ is {\em not necessarily a valid drone trajectory} if the delivery to $d_1$ from $s_2$ takes less time than the  delivery to $d_2$ from $s_2$. Then, 
when  delivering to $d_1$ first, 
the drone returns to the truck at point  $r_2'$ located strictly between $s_2$ and $r_2$.  But then the path $s_2,d_1,r_2',d_2,r_1$ is even shorter than path $s_2,d_1,r_2,d_2,r_1$. Thus, when starting 
the delivery to $d_2$ at $r_2'$, the drone returns to the truck at a point $r_1'$ to the left of $r_1$, which improves the total delivery time to $d_1$ and $d_2$.  

\begin{figure}[ht]
\centering
\includegraphics[scale=0.7]{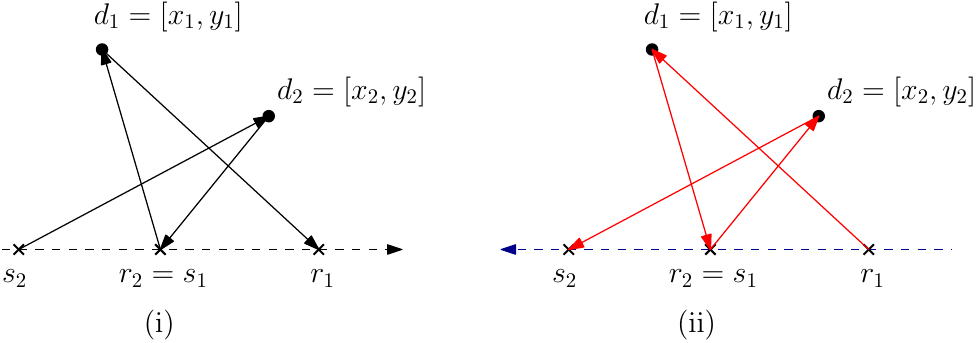}
\caption{ Reversing the directions of deliveries and the movement of the truck in (ii) converts a configuration of the second case to the first case}
\label{fig:noncros2}
\end{figure}

Now suppose instead that delivery to $d_1$, when started from $s_2$, takes more time then  the delivery to $d_2$ from $s_2$, as for example on Figure \ref{fig:noncros2} (i). By the shape of the function $\ret(s, d,v))$, see Observation \ref{obs:profile}, and since 
$es(d_1)<es(d_2)$ and $ls(d_1) <ls(d_2)$,  a delivery to $d_1$, when started from $s_1$ also takes more time than  the delivery to $d_2$ from $s_1$.
Consider the configuration on Figure \ref{fig:noncros2}(ii)
in which we reverse the movement of the drone and of the truck. Then we reduced
this to the previous case and a delivery from $s_1$ first to  $d_2$ and then to $d_1$ is shorter, and by reversing this once more we obtain that the delivery from $s_2$ first to $d_1$ and then to $d_2$ is shorter.   
%\end{proofE}
\end{proof}

A \IN instance is guaranteed to have an optimal schedule with non-crossing trajectories. However not all optimal schedules give non-crossing trajectories. Indeed 
there are non-proper instances  where crossing of trajectories is required in any optimal schedule as demonstrated in Figure \ref{fig:crossingneeded}.% in Appendix~\ref{sec:figures}.

\label{sec:figures}
\begin{figure}[ht]
\centering
\includegraphics[width=0.5\textwidth]{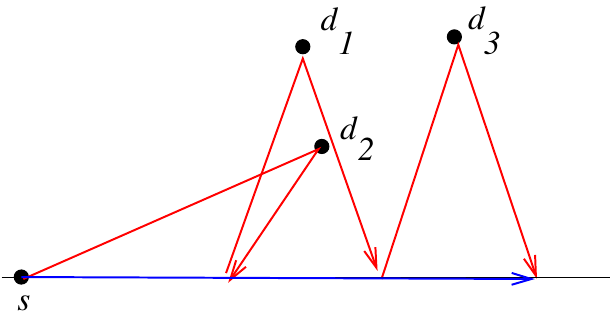}
\caption{ An instance of the problem where any optimal schedule
must contain crossing trajectories.  
Points $d_1$ and $d_3$ are at the maximum reach of the drone.   
When scheduling a delivery first to $d_1$ then 
 a delivery is possible either to $d_2$ or to $d_3$, but not to both. }   
\label{fig:crossingneeded}
\end{figure}

\begin{definition} Let  $I=(v, R, D)$  be an instance of the truck-drone 
delivery problem where $D=\{d_1,d_2,\ldots,d_n\}$. 
We  call  schedule $$\mathcal{S}_I=((d_{i_1},s_1),(d_{i_2},s_2),\ldots, (d_{i_m},s_m), m\leq n$$ {\em \MON} if the $x$-coordinate of $d_{i_j}$ is strictly 
less than the  $x$-coordinate of $d_{i_{j+1}}$ for every $1\leq j\leq m-1 $,  
\end{definition}
In the next theorem we show that there always exists a monotone schedule with the optimal substructure property 
for \IN instances.

%\begin{theoremE}[][end,restate]
\begin{theorem}
\label{th:monotone}
Let  $I=(v, R, D)$  be a \IN instance of the truck-drone 
delivery problem with $D=\{d_1,d_2,\ldots,d_n\}$. 
Assume that the points in $D$ are listed according to increasing $x$-coordinate. Then there is an optimal schedule 
 $\mathcal{S}_I=((d_{i_1},s_1),(d_{i_2}.s_2),\ldots, (d_{i_m},s_m)$, $m\leq n$
 for this instance with the following properties:\
\begin{enumerate}
\item  $\mathcal{S}_I$ is monotone.
\item For every $j\leq m$, the initial part $((d_{i_1},s_1),(d_{i_2},s_2),\ldots, (d_{i_j},s_j))$ of  $\mathcal{S}_I$ 
 minimizes the delivery completion time for any    
subset of $\{d_1,d_2, \ldots,d_{i_j}\}$ of size $j$.
\end{enumerate} 
%\end{theoremE}
\end{theorem}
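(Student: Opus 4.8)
The plan is to prove both parts by a single exchange/induction argument, building the desired schedule from an arbitrary optimal one. First I would establish part (1): take any optimal schedule $\mathcal{S}_I$ and repeatedly apply Lemma~\ref{lemma:noncros}(2) to any pair of \emph{consecutive} deliveries $(d_{i_j}, d_{i_{j+1}})$ whose $x$-coordinates are out of order. Each such swap keeps the schedule feasible and does not increase the completion time of the two deliveries involved, hence does not increase the completion time of the whole schedule (the remaining deliveries can be kept, since the drone is back on the truck no later than before). Since each swap strictly decreases the number of ``inversions'' in the $x$-order of the delivery sequence — a nonnegative integer potential — this process terminates after finitely many swaps with a feasible schedule of the \emph{same length} in which consecutive deliveries are $x$-ordered; that is exactly a monotone optimal schedule. (A small subtlety to check: Lemma~\ref{lemma:noncros} is stated for the case where a delivery \emph{immediately precedes} the other, which is precisely the consecutive-pair situation, so it applies directly; also I should note that after a swap, previously-ordered pairs elsewhere remain ordered, so the inversion count genuinely drops.)

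For part (2), I would argue by strong induction on $j$ that among all monotone optimal schedules, one can be chosen so that every prefix is time-optimal for its underlying $j$-element subset. The key structural fact is the \emph{optimal substructure}: if $((d_{i_1},s_1),\ldots,(d_{i_j},s_j))$ is a monotone feasible schedule, then its completion time $r_j$ is a nondecreasing function of the completion time $r_{j-1}$ of its prefix — because by Observation~\ref{obs:2}/Lemma~\ref{lem:return_time_from_starting_pos}, starting the $j$-th delivery from the truck at an earlier position can only return the drone to the truck earlier (or, if $r_{j-1} < es(d_{i_j})$, at the same fixed point $er(d_{i_j})$). Thus, given any monotone optimal schedule $\mathcal{S}_I$ of length $m$, I would inductively replace each prefix of length $j$ by a time-optimal schedule $T_j$ on the \emph{same} set $\{d_{i_1},\ldots,d_{i_j}\}$ (or, more strongly, on the best possible $j$-subset of $\{d_1,\ldots,d_{i_j}\}$): take a time-optimal monotone $(j-1)$-prefix $T_{j-1}$, then append the delivery that minimizes the resulting return time; by the monotonicity-in-$r_{j-1}$ property, this is no worse than the original prefix, so feasibility of the suffix of $\mathcal{S}_I$ is preserved, and the full schedule remains optimal. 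Here I use Lemma~\ref{lemma:noncros} again to argue we may keep every prefix monotone, and Lemma~\ref{lemma:order} to argue that the optimal $j$-subset can be taken to be an ``initial-looking'' subset consistent with monotonicity.

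The main obstacle I anticipate is making precise the claim in part (2) that a time-optimal length-$j$ schedule for \emph{some} $j$-subset of $\{d_1,\dots,d_{i_j}\}$ can be extended to a time-optimal length-$(j{+}1)$ schedule — the difficulty is that the best $j$-subset and the best $(j{+}1)$-subset need not be nested, so a naive induction on prefixes of one fixed optimal schedule is not obviously enough. I would handle this by defining, for each index $t$ and each size $j$, the quantity $\mathrm{OPT}(t,j) := $ the minimum completion time over all monotone feasible schedules of length $j$ using only points among $\{d_1,\dots,d_t\}$, and proving the Bellman-type recurrence $\mathrm{OPT}(t,j) = \min\bigl(\mathrm{OPT}(t-1,j),\ \ret(\mathrm{OPT}(t-1,j-1), d_t, v)\bigr)$ using the monotonicity-in-$r_{j-1}$ property together with Lemma~\ref{lemma:order} (which guarantees that in a monotone schedule ending at $d_t$, the previous point lies among $d_1,\dots,d_{t-1}$ and its interval is compatibly ordered, so no ``later'' point needs to be considered). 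The optimal schedule witnessing $\mathrm{OPT}(n, m)$ for the largest feasible $m$ then has, by construction of the recurrence, the stated prefix-optimality; and it is monotone by construction, giving (1) for free. I would present the exchange argument of the previous paragraph as the conceptual proof and remark that it is exactly this recurrence that the $O(n^3)$ algorithm of Section~\ref{sec:optimal} evaluates.
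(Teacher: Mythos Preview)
Your proposal is correct and follows essentially the same approach as the paper. Part (1) is identical: bubble-sort away inversions using Lemma~\ref{lemma:noncros}(2). For part (2), the paper also argues by prefix replacement --- if some $j$-subset of $\{d_1,\dots,d_{i_j}\}$ finishes earlier, splice its schedule in front of the original suffix $((d_{i_{j+1}},s_{j+1}),\dots)$ and iterate, decreasing $j$ --- but it does so in two sentences and leaves implicit exactly the points you make explicit: that $\ret$ is monotone in the start time (so an earlier prefix completion keeps the suffix feasible), and that the replacement prefix can itself be taken monotone so the spliced schedule stays monotone. Your observation about non-nested optimal subsets and the resulting Bellman recurrence $\mathrm{OPT}(t,j)=\min\bigl(\mathrm{OPT}(t-1,j),\ \ret(\mathrm{OPT}(t-1,j-1),d_t,v)\bigr)$ is not in the proof of this theorem in the paper, but it is precisely the recurrence the paper states (without derivation) in the proof of Theorem~\ref{th:monalg}; so you have effectively merged the two proofs and filled in the justification the paper omits.
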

%\begin{proofE}
\begin{proof}
Assume $I=(v, R, D)$  is a given \IN instance of the truck-drone 
delivery problem  and let $\mathcal{S}_I$ be an 
optimal schedule for it.
By a repeated application of Lemma \ref{lemma:noncros}
we can  swap any two consecutive deliveries that don't respect the 
order of $x$-coordinates of points, as in the bubble sort, while maintaining 
the schedule optimal. This eventually produces a
monotone schedule of the same (optimal) length, proving (1). 

To show (2), assume that for some  $j\leq m$, there is a subset of $j$ points 
$\{d_{i_1'},d_{i_2'},\ldots, d_{i_j}'\}$ of  the set $\{d_1,d_2, \ldots,d_{i_j}\}$
for which there is a schedule  $((d_{i_1}',s_1'),$ $(d_{i_2}',s_2'),\ldots, (d_{i_j}',s_j'))$
with $s_j'<s_j$ and which minimizes the delivery completion time for any subset of  $\{d_1,d_2, \ldots,d_{i_j}\}$ of size $j$.
Then by concatenating  $((d_{i_1}',s_1'),(d_{i_2}',s_2'),\ldots, (d_{i_j}',s_j'))$ with  $((d_{i_{j+1}},s_{j+1}),\ldots, (d_{i_m},s_m)$,
we get a valid schedule.
In this manner, repeating the process starting with  $j=i_m$ 
and decreasing appropriately the value of $j$  we can get a schedule for $I$ that is optimal, monotone, and satisfies the property 2 of the theorem. 
%\end{proofE}
\end{proof}

We use Theorem~\ref{th:monotone} to describe a dynamic programming algorithm 
that finds an optimal schedule for \IN instances. 

\begin{theorem}
\label{th:monalg}
There is an $O(n^3)$ algorithm that calculates an optimal 
schedule for any \IN instance   $I=(v, R, D)$  of the truck-drone 
delivery problem. 
\end{theorem}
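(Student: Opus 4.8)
The plan is to design a dynamic programming algorithm that exploits the structure established in Theorem~\ref{th:monotone}: there exists an optimal schedule that is monotone and whose every prefix minimizes the completion time among all subsets of the corresponding size drawn from the appropriate prefix of delivery points. First I would sort the delivery points $d_1, d_2, \ldots, d_n$ by increasing $x$-coordinate (and, by Lemma~\ref{lemma:order}, this simultaneously sorts the intervals $[es(d_i), ls(d_i)]$ in the natural left-to-right order), which takes $O(n \log n)$ time. Then I would define a table entry $T[i][j]$ for $1 \le j \le i \le n$ to be the minimum possible return time of the drone after completing a monotone schedule of length exactly $j$ whose last delivery is to $d_i$, using only delivery points from $\{d_1, \ldots, d_i\}$; set $T[i][j] = \infty$ if no such schedule exists.

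The recurrence is the heart of the argument. For the base case $j = 1$, we have $T[i][1] = er(d_i)$, the earliest return time achievable for a single delivery to $d_i$ (by Observation~\ref{obs:2}, starting at $es(d_i)$ is optimal). For $j \ge 2$, a monotone schedule of length $j$ ending at $d_i$ consists of a monotone schedule of length $j-1$ ending at some $d_\ell$ with $\ell < i$, followed by a delivery to $d_i$; by Theorem~\ref{th:monotone}(2) we may assume that the length-$(j-1)$ prefix itself achieves the minimum completion time $T[\ell][j-1]$, and then the drone departs for $d_i$ as early as possible, namely at time $\max(T[\ell][j-1], es(d_i))$. Hence
\begin{equation*}
T[i][j] = \min_{1 \le \ell < i} \ \ret\bigl(\max(T[\ell][j-1], es(d_i)),\, d_i,\, v\bigr),
\end{equation*}
where $\ret$ is the function from Lemma~\ref{lem:return_time_from_starting_pos} (returning $\infty$ when the start time exceeds $ls(d_i)$). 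Each entry is computed by taking a minimum over at most $n$ predecessors, each evaluation of $\ret$ costing $O(1)$ time, and there are $O(n^2)$ entries, giving the claimed $O(n^3)$ running time; the optimal schedule length is $\max\{\, j : T[i][j] < \infty \text{ for some } i \,\}$, and the schedule itself is recovered by storing back-pointers.

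The correctness proof proceeds by induction on $j$, showing that $T[i][j]$ equals the true minimum completion time over all (not necessarily monotone) feasible schedules of length $j$ that deliver to $d_i$ last using points from $\{d_1, \ldots, d_i\}$. The key point is that Theorem~\ref{th:monotone} guarantees that an optimal-length schedule can be taken monotone with the prefix-optimality property, so restricting attention to monotone schedules built by the recurrence loses nothing; one also needs the observation (implicit in Observation~\ref{obs:2} and Lemma~\ref{lem:return_time_from_starting_pos}) that, given a fixed prefix with return time $r$, delaying the departure to $d_i$ beyond $\max(r, es(d_i))$ can only delay the return — this is where monotonicity of $\ret$ on $[es(d_i), x_i(1 - 1/\sqrt{v^2-1})]$ from Observation~\ref{obs:profile} together with the fact that the drone never needs to wait past $es(d_i)$ is invoked. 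The main obstacle I anticipate is handling the subtlety that the prefix achieving minimum completion time for $j-1$ points might use a different subset than the one appearing in a globally optimal length-$j$ schedule; this is exactly what Theorem~\ref{th:monotone}(2) is engineered to resolve, so the argument reduces to citing it carefully and verifying that the exchange it licenses preserves feasibility of the suffix — which it does, since a smaller completion time for the prefix only relaxes the constraint on the next departure. A secondary technical point is confirming that $\ret$ is monotone non-decreasing in its first argument on the feasible range $[es(d_i), ls(d_i)]$, or at least that we never benefit from starting later than forced; this follows from Observation~\ref{obs:profile} since the round-trip time plus the start time, i.e. the return time itself, is increasing in $s$ throughout $[es(d_i), ls(d_i)]$.
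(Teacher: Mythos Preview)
Your proposal is correct and essentially identical to the paper's proof: both sort the points by $x$-coordinate, set up a two-index DP table indexed by (last delivery point, schedule length), fill it via the recurrence that minimizes over predecessors using the $\ret$ function, and appeal to Theorem~\ref{th:monotone} for optimal substructure. The only cosmetic differences are that the paper swaps the roles of your $i$ and $j$, and it absorbs your explicit $\max(T[\ell][j-1], es(d_i))$ into the definition of $\ret$ from Lemma~\ref{lem:return_time_from_starting_pos} (which already returns $er(d)$ when $s < es(d)$); your extra discussion of the monotonicity of $\ret$ in its first argument is a welcome clarification that the paper leaves implicit.
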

\begin{proof}
Assume the delivery points in $D$ are listed in the order of their $x$ coordinates. Define $T(i, j)$ to be the earliest delivery completion time for the truck and the drone to perform exactly $i$ deliveries from among $d_1, d_2, \ldots, d_j$ where $d_j$ must be included in the schedule. If such a schedule is not possible, we define $T(i,j) = \infty$. We can compute $T(i,j)$ using dynamic programming as follows. We clearly have $T(1, j) = \ret(s, d_j, v)$ for the base case of $i = 1$ (see Lemma~\ref{lem:return_time_from_starting_pos} for the definition of $\ret$) where $[s,0]$ is the starting position of the truck. For $i \ge 2$, we have $T(i, j) = \min_{j' < j} \ret(T(i-1, j'), d_j, v)$. This recursive formula immediately follows from the optimal substructure property stated in Theorem~\ref{th:monotone}: a schedule resulting in the earliest completion time of making $i$ out of the first $j$ deliveries where $d_j$ is included consists of delivering to $i-1$ out of the first $j' < j$ delivery points (with earliest completion time $T(i-1, j')$) followed by earliest delivery completion to $d_j$. Note that defining $\ret(s',v,d) = \infty$ when $s' > ls(d)$ and $T(i,j) = \infty$ when delivery is impossible correctly works with the recursive computation of $T$.

Having computed $T$, we can find the maximum number of deliveries that can completed in a valid schedule by taking the maximum $m$ such that $T(m,j) \neq \infty$ for some $j$. By recording for each $(i,j)$ pair which choice of $j'$ resulted in the table entry $T(i,j)$, we can reconstruct the schedule itself using standard backtracking techniques.

The running time is dominated by computing the table $T(i,j)$. It has $O(n^2)$ entries and each entry can be computed in time $O(n)$, since a single evaluation of $\ret$ takes constant time. The overall runtime is then $O(n^3)$.
% \\
% Note: \\
% I have in my notes a bottom-up construction of a two-dimensional table 
% $T(i,j)$ which gives the shortest time at which $i$ deliveries can be made from among the first $j$  points $d_1,d_2,\ldots,d_j$. 
% But I think Denis had a different top-down recursive
% formulation that looked nicer, but I didn't find it among my notes. We should put it here.       
\end{proof}

\section{Discussion}
\label{discuss}

We have shown that even in the simple case of a single drone with a single truck 
travelling in a straight line, the problem of coordinating their efforts to maximize
the number of deliveries made is hard. Our work raises a number of different questions.
We show that a greedy strategy achieves a 2-approximation. Is a better approximation
possible? In particular, is the problem APX-hard or might there be a PTAS for it?
Our implementation of the greedy strategy runs in $O(n^2)$ time. Is a better running time 
for the algorithm possible by taking advantage of the structure of the intervals created by the drone
paths? The set of proper instances includes those where the $y$-coordinate is fixed. Could this
be expanded to include points with a limited number of different $y$-coordinates? More generally,
is there a "natural" setting in which the problem becomes fixed-parameter tractable?
Finally, many variations on the problem are worth pursuing. Rather than maximizing the number
of deliveries made with a given speed or drone range, one could consider the dual problems of 
minimizing the speed or range required to complete all deliveries. Versions with multiple
drones and/or trucks, larger capacity drones, etc. are also of interest. 

\bibliographystyle{plain}
\bibliography{cite}

\end {document}